
\documentclass[12pt]{article}

\usepackage{amsmath,amssymb,amsthm, units, color}

\usepackage{graphicx}

\usepackage[all]{xy}

\usepackage{tikz}

\newtheorem{definition}{Definition}[section]

\newtheorem{lemm}{Lemma}[section]

\newtheorem{exam}{Example}[section]

\newtheorem{prop}{Proposition}[section]

\newtheorem{cor}{Corollary}[section]

\newtheorem{theo}{Theorem}[section]

\def\<{\left <}

\def\>{\right >}

\def\({\left (}

\def\){\right )}

\def\cbra{\left \{}

\def\cket{\right \}}

\DeclareSymbolFont{AMSb}{U}{msb}{m}{n}

\DeclareMathSymbol{\N}{\mathbin}{AMSb}{"4E}

\DeclareMathSymbol{\Z}{\mathbin}{AMSb}{"5A}

\DeclareMathSymbol{\R}{\mathbin}{AMSb}{"52}

\DeclareMathSymbol{\Q}{\mathbin}{AMSb}{"51}

\DeclareMathSymbol{\I}{\mathbin}{AMSb}{"49}

\DeclareMathSymbol{\C}{\mathbin}{AMSb}{"43}

\newcommand{\weg}[1]{}

\title{A colouring protocol for the generalized Russian cards problem}

\author{Andr\'es Cord\'on--Franco, Hans van Ditmarsch, \\ David Fern\'andez--Duque, and Fernando Soler--Toscano\thanks{Emails and affiliations: {\tt \{acordon,hvd,dfduque,fsoler\}@us.es}, University of Sevilla, Spain. Hans van Ditmarsch is also affiliated to IMSc, Chennai, India, as a research associate.}}

\begin{document}

\maketitle

\begin{abstract}
In the generalized Russian cards problem, Alice, Bob and Cath draw $a$, $b$ and $c$ cards, respectively, from a deck of size $a+b+c$. Alice and Bob must then communicate their entire hand to each other, without Cath learning the owner of a single card she does not hold. Unlike many traditional problems in cryptography, however, they are not allowed to codify or hide the messages they exchange from Cath. The problem is then to find methods through which they can achieve this. We propose a general four-step solution based on finite vector spaces, and call it the ``colouring protocol'', as it involves colourings of lines.

Our main results show that the colouring protocol provides a solution to the generalized Russian cards problem in cases where $a$ is a power of a prime, $c=O(a^2)$ and $b=O(c^2)$. This improves substantially on the collection of parameters for which solutions are known. In particular, it is the first solution which allows the eavesdropper to have more cards than one of the communicating players.
\end{abstract}


\section{Introduction}

The goal of this article is to provide a solution to the {\em generalized Russian cards problem} \cite{albertetal:2005} based on finite vector spaces. The problem is parametrized by the triple $(a,b,c)$, which we will often call its {\em size} and may be stated as follows:

\paragraph{The generalized Russian cards problem}
\begin{quote}
{Alice, Bob and Cath each draw $a,b$ and $c$ cards, respectively, from a deck of $a+b+c$. All players know which cards were in the deck and how many of them the other players drew, but each player may only see the cards in their own hand.

Alice and Bob, however, want to know exactly which cards the other holds. Moreover, they do not want for Cath to learn who holds {\em any} card whatsoever, aside of course from her own cards.

However, they may only do so by making unencrypted public announcements, so that Cath can learn all the information that they exchange.

Can Alice and Bob achieve this?}
\end{quote}
\vskip 15pt 

Some general assumptions are needed to make the problem precise, which we shall formalize in Section \ref{sec.some}. First, the cards are dealt beforehand in a secure phase which we treat as a black box and gives the players no information about others' cards. The agents have no communication before this phase and cannot share secrets (such as private keys). Second, the agents have unlimited computational capacity.\footnote{This assumption is unusual in traditional cryptography but is standard in information-theoretic approaches where {\em unconditional} rather than {\em probabilistic} security is sought \cite{maurer:1999}.} This means, on one hand, that solutions via encryption are not valid, provided they are vulnerable to cryptanalisis (independently of its computational cost). It also means that we shall not be concerned with the feasibility of agents' strategies, although this is certainly an important line for future inquiry. Third, all strategies are public knowledge, in keeping with Kerckhoffs' principle \cite{kerckhoffs:1883}. These assumptions turn the problem into a challenging combinatorial puzzle.

A (possible) solution to the generalized Russian cards problem is a {\em protocol;} these shall also be discussed formally in Section \ref{sec.some}. The solution we propose is, to the best of our knowledge, the first solution to the generalized Russian cards problem which works in cases where Cath holds more cards than Alice.

The Russian cards problem itself originates with Kirkman \cite{kirkman:1847}. There, the solution takes the form of a {\em design}, a collection of subsets of a given set that satisfies certain regulaties \cite{stinson:2004}. The design consists of seven triples. Cryptography based on card deals is also investigated in various other (not necessarily related) publications, such as \cite{mizukietal:2002,stiglic:2001}.

Many instances of the generalized Russian cards problem have been studied. We remark that all of these protocols obtain a notion of security called {\em weak $1$-security} by Swanson and Stinson \cite{swanson:2012}, the notion we are concerned with in this paper. A stronger and more general notion of {\em perfect $k$-security} is also considered by them but we shall not treat it here. The known solutions are as follows.
\begin{enumerate}
\item The original $(3,3,1)$ case has been extensively studied and has many solutions \cite{kirkman:1847,makarychevs:2001}.
\item The case $(4,4,2)$ has a three-step solution given in \cite{threesteps}, where it is also shown that there is no two-step solution.
\item All cases $(a,2,1)$ provided $a\equiv 0,4\pmod 6$ have been solved in \cite{albertetal:2005}.
\item  All cases $(a,b,1)$ with both $a,b>2$ have been solved in \cite{cordonetal:2012}.
\item If $c+1<a$ then there is a solution for $b=O(a^2)$ \cite{albertetal:2005}.
\end{enumerate}
The case $(a,b,1)$ with $a,b > 2$ subsumes the case $(3,b,1)$ for $b \geq 3$, also covered previously in \cite{albertetal:2005}; \cite{swanson:2012} demonstrates that, for the stronger notion of perfect 1-security, every announcement in a $(3,b,1)$ solution is a Steiner triple. They have yet another characterization result for perfect $k$-security. 

Our main contribution is to provide solutions for infinitely many cases where $b=O(c^2)$ and either $c=O(a^{\nicefrac 32})$ or $c=O(a^2)$, the first known solutions for $c>a$. We remark that all but one of the previously known protocols are two-step protocols, whereas our protocol is in four steps. In \cite{albertetal:2005} it is shown that there can be no two-step solution if $c\geq a-1$. As we provide solutions with $c>a$, the protocol we propose is longer. We do not know if a shorter, three-step variant is possible.

The plan of the paper is as follows. Section \ref{sec.some} formalizes our model of security and our notion of protocol. Section \ref{seccolp} gives an informal description of the protocol, in order to motivate the more specific description given in Definition \ref{def.colprotocol}. This protocol depends on certain parameters which do not exist for all values of $(a,b,c)$. The conditions under which such parameters exist are treated in Section \ref{line}. Finally, Section \ref{suitable} computes explicit bounds on the parameters for which the protocol is executable.

\section{Protocols and security} \label{sec.some} 

In this section we shall present the notion of {\em protocol} we shall use. Throughout this paper, we assume that $D$ is a fixed, finite set of ``cards''. A {\em card deal} is a partition $(A,B,C)$ of $D$; the deal has {\em size} $(a,b,c)$ if $A$ is an $a$-set, $B$ a $b$-set and $C$ a $c$-set, where by ``$x$-set'' we mean a set of cardinality $x$. The set of $x$-subsets of $Y$ is written $Y\choose x$. We think of $A$ as the {\em hand} of Alice, or that Alice {\em holds $A$}; similarly, $B$ and $C$ are the hands of Bob and Cath, respectively. In general we may simply assume that $D=\{1,\hdots,a+b+c\}$, and define ${\rm Deal}(a,b,c)$ to be the set of partitions of $D$ of size $(a,b,c)$.

Central to this text is the notion of {\em protocol.} Roughly, a protocol is a family of rules or instructions that Alice and Bob must follow in order to send out a sequence of public announcements. Protocols are in principle non-deterministic, assigning only a probability measure to the announcements that an agent may make.

In \cite{swanson:2012} and other papers, an announcement has been modelled as a set of hands that one of the agents may hold. Thus Alice would announce a subset $\mathcal A$ of $D\choose a$, indicating that $A\in\mathcal A$. More specifically, an announcement is a pair $(i,\mathcal A)$, where $i$ is the agent making the announcement (either Alice or Bob). But for our proposed solution, agents' announcements are not in one-one correspondence to subsets of $D\choose a$. Because of this, it will be simpler to model announcements as elements of a countable or finite set, say $\mathbb N$; a similar setup is proposed in \cite{swanson:2012}, where Alice enumerates all subsets of $D\choose a$ by $\mathcal A_0,\mathcal A_1,\hdots,\mathcal A_n$ and merely announces the appropriate subindex.

Thus we let ${\rm Ann}=\{Alice,Bob\}\times J$ be the set of announcements, where $J\subseteq\mathbb N$. A sequence of announcements $\vec\alpha=(i_0,\phi_0),(i_1,\phi_1),\hdots,(i_n,\phi_n)$ is a {\em run}. We shall write ${\rm Run}$ for the set of runs, including the empty sequence.

An (unusual) assumption of the problem is that there is a secure dealing phase which we treat as a black box. Initially, a card deal is selected randomly and players have knowledge of their own cards and of the size $(a,b,c)$ of the deal, but know nothing more about others' cards. Thus they are not able to distinguish between different deals where they hold the same hand. We model this by equivalence relations between deals; since from Alice's perspective, $(A,B,C)$ is indistinguishable from $(A,B',C')$, we define $(A,B,C)\stackrel{Alice}\sim(A',B',C')$ if and only if $A=A'$. We also define analogous equivalence relations for Bob and Cath.

In \cite{swanson:2012}, Swanson and Stinson define strategies of length two as assigning a probability distribution to Alice's possible announcements (after which Bob announces Cath's hand), and consider $n$-equitable strategies as a special case where Alice always has $n$ possible announcements each with probability $\nicefrac 1n$. We will work with {\em equitable} strategies (and merely call them {\em protocols}), where probability is distributed evenly among the possible outcomes but the number of outcomes is left unspecified. This will allow us to dispense with probability measures and only determine a set of possible announcements at each stage of the protocol. On the other hand, we consider protocols which may have more than two steps, making our definition more elaborate than Swanson and Stinson's.

\begin{definition}[Protocol]\label{defprot}
Let ${\rm Deal}={\rm Deal}(a,b,c)$.

A {\em protocol} (with $(a,b,c)$ as parameters) is a pair of functions $(j,\pi)$ assigning to every deal $\delta\in{\rm Deal}$ and every run $\vec\alpha\in{\rm Run}$ an agent $j(\vec\alpha)\in \{Alice,Bob\}$ and a finite set $\pi(\delta,\vec\alpha)\subseteq \mathbb N$ such that if $\delta'\stackrel {j(\vec\alpha)}\sim \delta$, then $\pi(\delta',\vec\alpha)=\pi(\delta,\vec\alpha)$.
\end{definition}

Thus once a deal has been given, a protocol assigns to each run a player who is to make the next announcement and a set of possible announcements for the player to make. These announcements are determined exclusively by the information an agent has access to, which is assumed to be {\em only} ({\it i}) their hand, ({\it ii}) the parameters $a,b,c$, ({\it iii}) the announcements that have been made previously and ({\it iv}) the protocol being executed.

Protocols are non-deterministic in principle and hence may be executed in many ways; an {\em execution of a protocol} is a pair $(\delta,\vec\alpha)$, where $\delta$ is a deal, $\vec\alpha=(i_0,\phi_0),\hdots,(i_n,\phi_n)$ a run and, for all $k<n$, $i_{k+1}=j\big((i_0,\phi_0),\hdots,(i_k,\phi_k)\big)$ and $\phi_{k+1}\in \pi\big(\delta,(i_0,\phi_0),\hdots,(i_k,\phi_k)\big).$

The first property that a protocol needs to have in order to be successful is for its set of executions to be non-empty. This observation seems trivial but it will occupy a large portion of the paper.

\begin{definition}[Executability]
Given a natural number $N$, a protocol is {\em $N$-executable} if for every deal $\delta$ there is an execution $(\delta,\vec\alpha)$ where $\alpha$ has length $N$.
\end{definition} 

We may omit the parameter $N$ and let it be given by context; the protocol we propose has four steps so we take $N=4$.

The second property that a protocol must have to be successful is that Alice and Bob know each other's cards (and hence the entire deal) after its execution:

\begin{definition}[Informativity]
An execution $((A,B,C),\vec\alpha)$ is informative for Alice if there is no execution $((A,B',C'),\vec\alpha)$ with $C'\not=C$.

Similarly, an execution $((A,B,C),\vec\alpha)$ is informative for Bob if there is no execution $((A',B,C'),\vec\alpha)$ with $C'\not=C$.

A protocol is informative if there exists $N$ such that the protocol is $N$-executable and every execution of length $N$ is informative both for Alice and for Bob.
\end{definition}

The third property is that, given a card $x$ which Cath does not hold, she should consider it possible that either Alice holds it or Bob holds it:

\begin{definition}[Safety]\label{w1s}
An execution $((A,B,C),\vec\alpha)$ of a protocol $(j,\pi)$ is {\em safe} if for every $x \not\in C$ there is
\begin{enumerate}
\item  a deal $\delta'=(A',B',C)$ such that $x\in A'$ and $(\delta',\vec\alpha)$ is also an execution of $(j,\pi)$, as well as
\item a deal $\delta''=(A'',B'',C)$ such that $x\in B''$ and $(\delta'',\vec\alpha)$ is also an execution of $(j,\pi)$.
\end{enumerate}
The protocol $(j,\pi)$ is safe if every execution of $(j,\pi)$ is safe.
\end{definition}

Safety is equivalent to {\em weak $1$-security} as defined by Swanson and Stinson \cite{swanson:2012}, but this is not the only notion of security they discuss. Weak security is equivalent to the statement that for all $C$, $x\not\in C$ and every execution $\vec\alpha$ of the protocol,
\[0<{\sf Pr}(x\in A|C,\vec\alpha)<1\]
(where ${\sf Pr}(X|Y)$ denotes conditional probability). Stronger notions of security would demand that Cath does not gain probabilistic information, so that
\[{\sf Pr}(x\in A|C,\vec\alpha)=\frac{a}{a+b}\]
(perfect $1$-security in \cite{swanson:2012}). In this paper we will only be concerned with weak security.


\section{The protocol}\label{seccolp}

In this section we shall describe the general colouring protocol, the central focus of this article. Before we do so, let us briefly describe some of the notation we shall use.

We will use $p$ to denote a prime or a power of a prime, and $\mathbb F_p$ the field with $p$ elements. If $d$ is any natural number, $\mathbb F^d_p$ denotes the vector space of dimension $d$ over $\mathbb F_p$. For a line $\ell=x+\lambda y$, we say $y$ is a {\em directing vector} of $\ell$. We denote by $\mathcal L^d_p$ the set of all lines in $\mathbb F^d_p$. The theory of finite fields and finite geometries has been extensively studied; see, for example, \cite{lidl1997,dembowski1997}.

For natural numbers $d,n$ we define $\sigma_d(n)$ to be the sum $n^{d-1}+n^{d-2}+\hdots+n^0$. We know from basic algebra that $\sigma_d(n)=(n^{d}-1)/(n-1).$ This will be a very useful quantity to keep in mind; for example,

\begin{lemm} Given $x\in \mathbb F^d_p$, there are $\sigma_{d}(p)$ distinct lines passing through $x$. 
\end{lemm}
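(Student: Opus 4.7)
The plan is a standard counting argument: parametrize lines through $x$ by their directing vectors, then quotient by the equivalence ``gives the same line''. First I would observe that every line through $x$ has the form
\[
\ell_y = \{x + \lambda y : \lambda \in \mathbb F_p\}
\]
for some directing vector $y$. For $\ell_y$ to actually be a line (and not collapse to the single point $x$), $y$ must be nonzero, so there are exactly $p^d - 1$ candidate directing vectors.

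Next I would show that two nonzero vectors $y,y' \in \mathbb F^d_p$ give the same line through $x$ if and only if they are nonzero scalar multiples of one another. One direction is immediate by reparametrization. For the other direction, $\ell_y = \ell_{y'}$ implies $y' \in \ell_y - x = \mathbb F_p \cdot y$, so $y' = \mu y$ for some $\mu \in \mathbb F_p$, and $\mu \neq 0$ because $y' \neq 0$. Consequently each line through $x$ arises from exactly $|\mathbb F_p^\times| = p-1$ directing vectors.

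Dividing, the number of distinct lines through $x$ equals
\[
\frac{p^d - 1}{p - 1} = \sigma_d(p),
\]
where the final equality is the geometric series identity already noted in the excerpt. There is no real obstacle here; the only point requiring a small verification is the ``same line iff proportional directing vectors'' equivalence, and the count is independent of the particular point $x$ because translation by $x$ is a bijection of $\mathbb F_p^d$. In projective language, one is simply recording the well-known count $|\mathbb P^{d-1}(\mathbb F_p)| = \sigma_d(p)$ of $1$-dimensional subspaces of $\mathbb F_p^d$.
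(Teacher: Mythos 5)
Your proof is correct and is the standard projective counting argument; the paper states this lemma without proof, treating it as a basic fact of finite geometry, and your argument (nonzero directing vectors modulo nonzero scalars, giving $(p^d-1)/(p-1)=\sigma_d(p)$) is exactly the intended justification.
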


Let $p$ be a prime or a prime power and $d \geq 2$. Assume that Alice holds as many cards as points in a line of $\mathbb F^d_p$ (i.e.\ $a = p$) and that Alice, Bob and Cath together hold as many cards as points in the whole space (i.e.\ $a+b+c = p^d$). Let $K$ be a set of $k$ colours, identified with the numbers $1,\hdots,k$. Colourings will be a crucial element in our protocol:

\begin{definition}[Colouring] \label{def.colouring} A {\em $k$--colouring} is a function $\xi:\mathcal L^d_p\to K.$

\end{definition}

\noindent That is, $\xi$ assigns a colour from $\{ 1, \, \dots \,k\}$ to each line in $\mathbb F^d_p$.

Let us now give an informal account of the protocol; Definition \ref{def.colprotocol} will give a formal version and Section \ref{line} will study its executability. It consists of the following four steps:

\begin{enumerate}

\item Alice maps all the cards into $\mathbb F^d_p$ in such a way that her cards form a line and announces the mapping.

\item Bob announces a \emph{suitably chosen} $k$-colouring $\xi$.

\item Alice announces the colour of her hand according to $\xi$.

\item Bob announces Cath's cards.

\end{enumerate}

Of course, we have yet to specify what a suitable colouring is. This will be the focus of the rest of this section. In order to guarantee that the protocol is informative (Alice and Bob can deduce the card deal after the protocol's execution), the colouring must be {\em distinguished}, as defined below. For it to be safe (Cath cannot learn any of Alice or Bob's cards), the colouring should also be {\em rich}. However, colourings which are merely rich and distinguished may give Cath too much information, and thus we will have to replace the condition of being distinguished by the stronger version of being {\em very distinguished.} Once we have defined these notions the next question is for which $a,b,c$ and $k$ such colourings exist, i.e., when the protocol is executable; this shall be the topic of later sections.

First let us focus on making the protocol be informative. Given a card deal $(A,B,C)$, we should design $\xi$ in such a way that only $A$ itself can be the line entirely contained in $A \cup C$ of whichever colour Alice announces. In this way, Bob can unequivocally identify $A$. But there may be many such spaces contained in $A\cup C$, and at the beginning of the protocol Bob has no way of telling them apart. Thus we arrive at the following:

\begin{definition}[Distinguished colouring]
We say that a $k$--colouring $\xi$ is
\emph{distinguished} for a set $E \subseteq \mathbb F^d_p$ if no two
distinct lines contained entirely in $E$ have the same colour.
\end{definition}

\begin{exam}
  \begin{figure}[htbp!]
  \centering
\begin{minipage}[c]{4.5cm}
    \includegraphics[width=4.5cm]{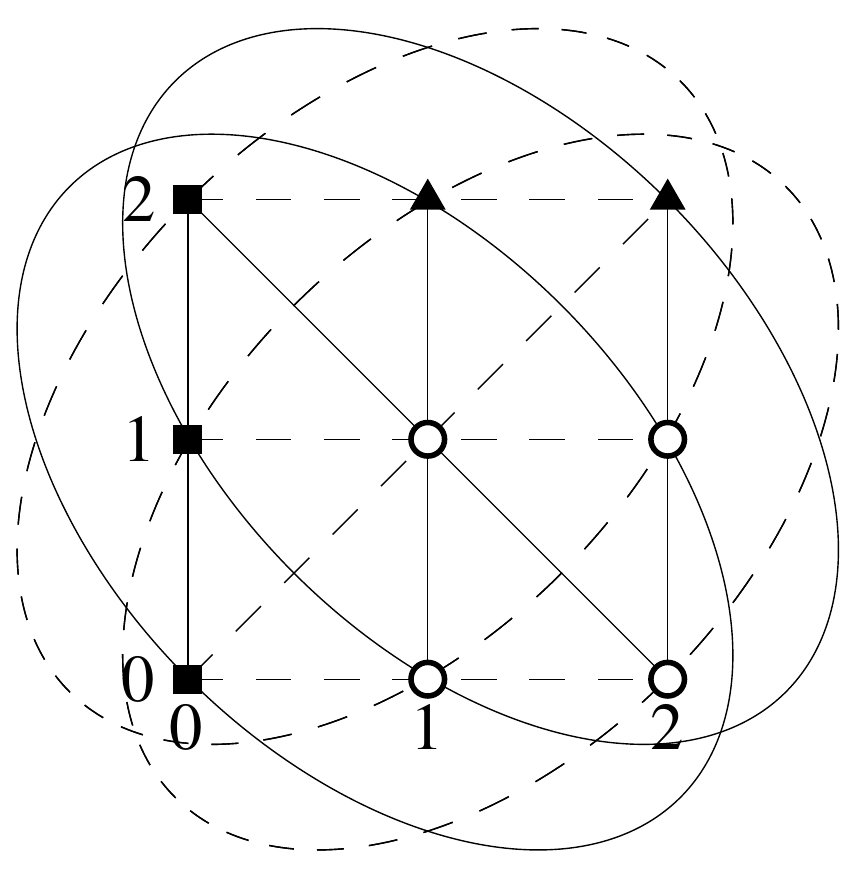}
  \end{minipage}\ \ \ 
  \begin{minipage}[c]{3cm}
    \includegraphics[width=3cm]{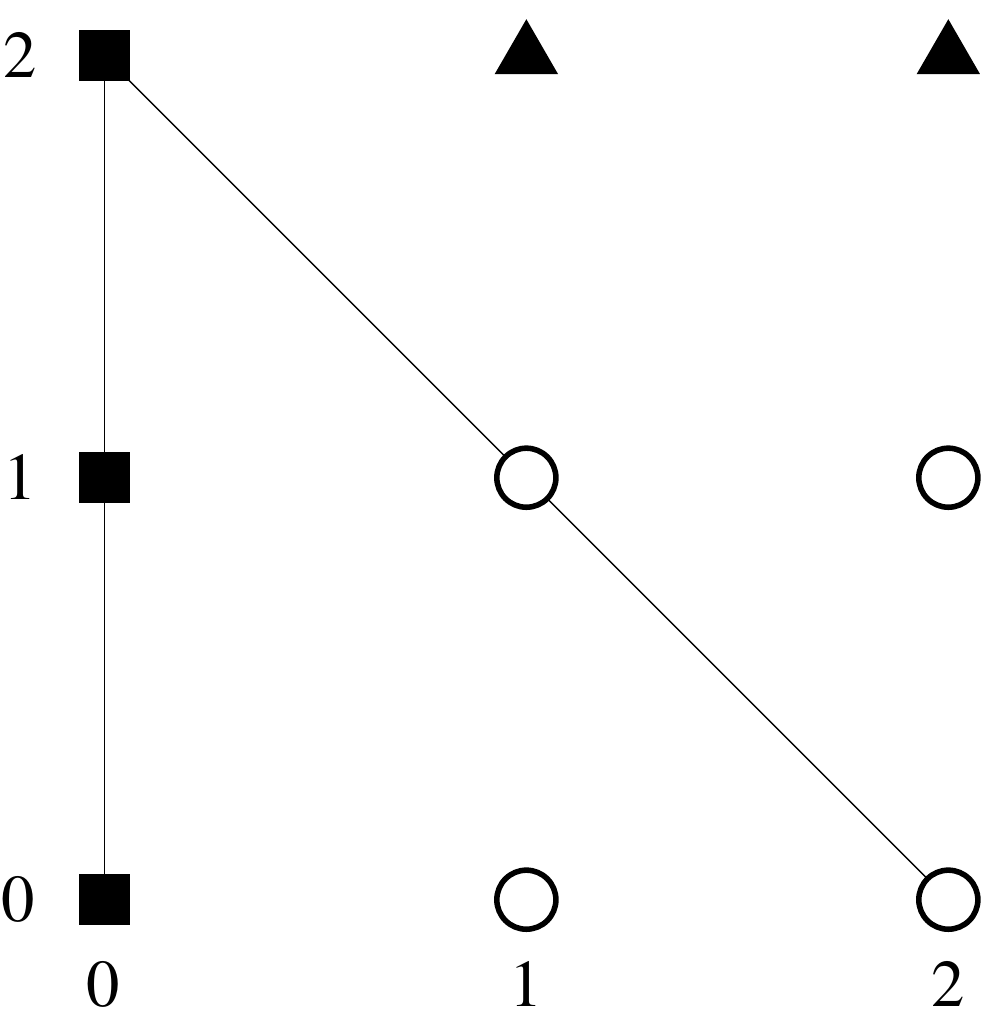}
  \end{minipage}
  \caption{A simple $2$-colouring.}

  \label{fig:NotRich}

\end{figure}
Figure~\ref{fig:NotRich} illustrates the possible effects of a distinguished colouring. The left picture shows a $2$-colouring that Bob announces after Alice maps the cards into $\mathbb F^2_3$. Throughout the examples we will continue to use black squares for Alice's cards, white circles for Bob's and black triangles for Cath's, so that cards held by Alice or Cath are black-filled shapes. Alice's cards $A$ form the line $\{00, 01, 02\}$. Cath's cards $C$ are $\{12,22\}$. The set $A\cup C$ also contains the line $\{02,12,22\}$.

In order to enable Alice, by her later response, to distinguish between the lines, Bob announces the $2$-colouring in the the left picture. It has two `colours': solid lines and dashed lines. Ellipses in the figure are lines as well, e.g. $\{02,10,21\}$ is a dashed line.

Suppose Alice announces that her line is solid. Then, Bob will learn Alice's cards and he can announce Cath's cards. But the colouring that Bob announces is not safe. After Alice announces that her cards are a solid line, Cath discards all solid lines that meet some of her points. The right picture in Figure~\ref{fig:NotRich} shows the only two lines that Cath cannot discard. This makes Cath learn that Alice has $02$ and that Bob has $10$ and $21$.
\end{exam}

Hence, colourings that are merely distinguished will make the colouring protocol informative but not necessarily safe. To remedy this, we must have enough colours, and the colouring must be rich enough, so that for every card that Cath does not hold she should consider it possible both that Alice holds it and that Alice does not hold it.

\begin{definition}[Rich colouring]\label{defrich}

A $k$--colouring $\xi$ is {\em rich} (or $c$--rich) if for any $c$-set $C$, colour $i$, and point $x \not\in C$, there is an $i$-coloured line $A$ containing $x$ that avoids $C$ and there also is an $i$-coloured line $A'$ not containing $x$ that avoids $C$.

\end{definition}

\weg{
\begin{exam}
  \begin{figure}[htbp!]
  \centering
  \includegraphics[width=3.5cm]{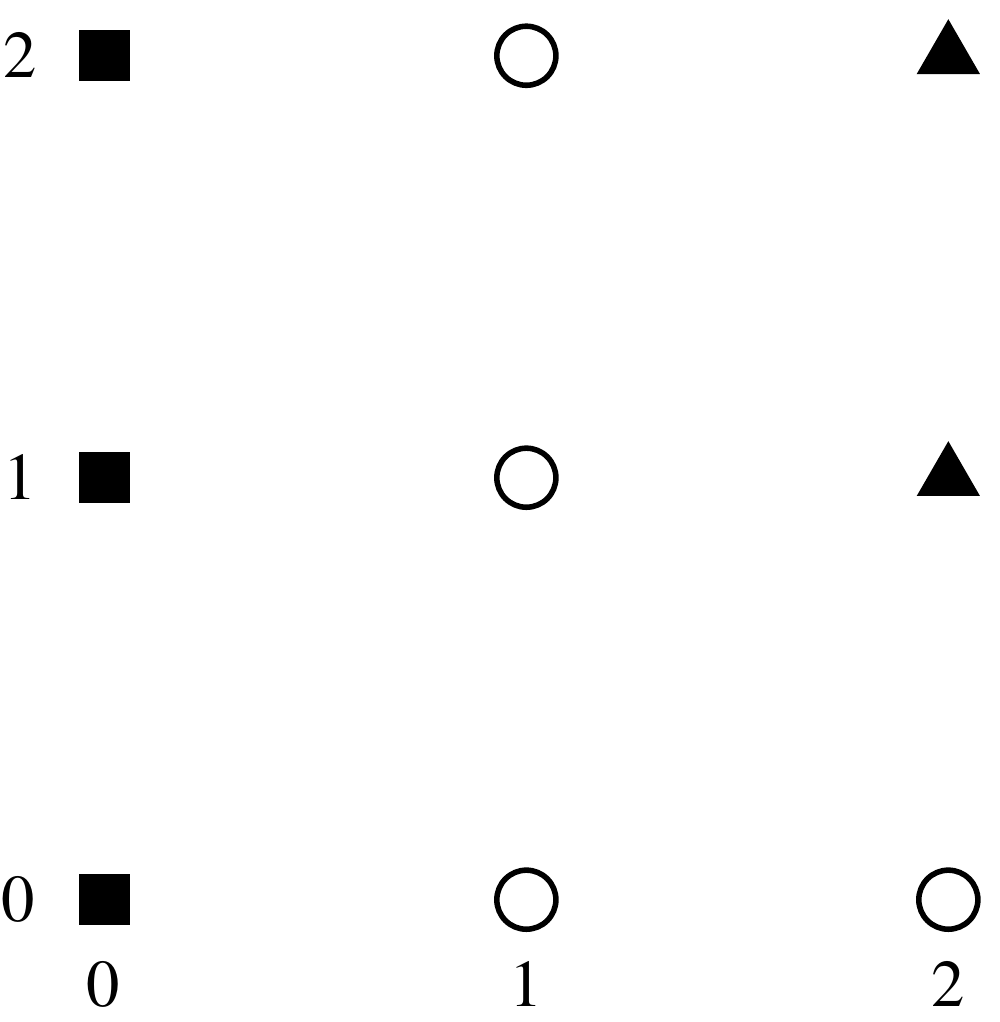}\hspace{1.5cm}
  \includegraphics[width=3.5cm]{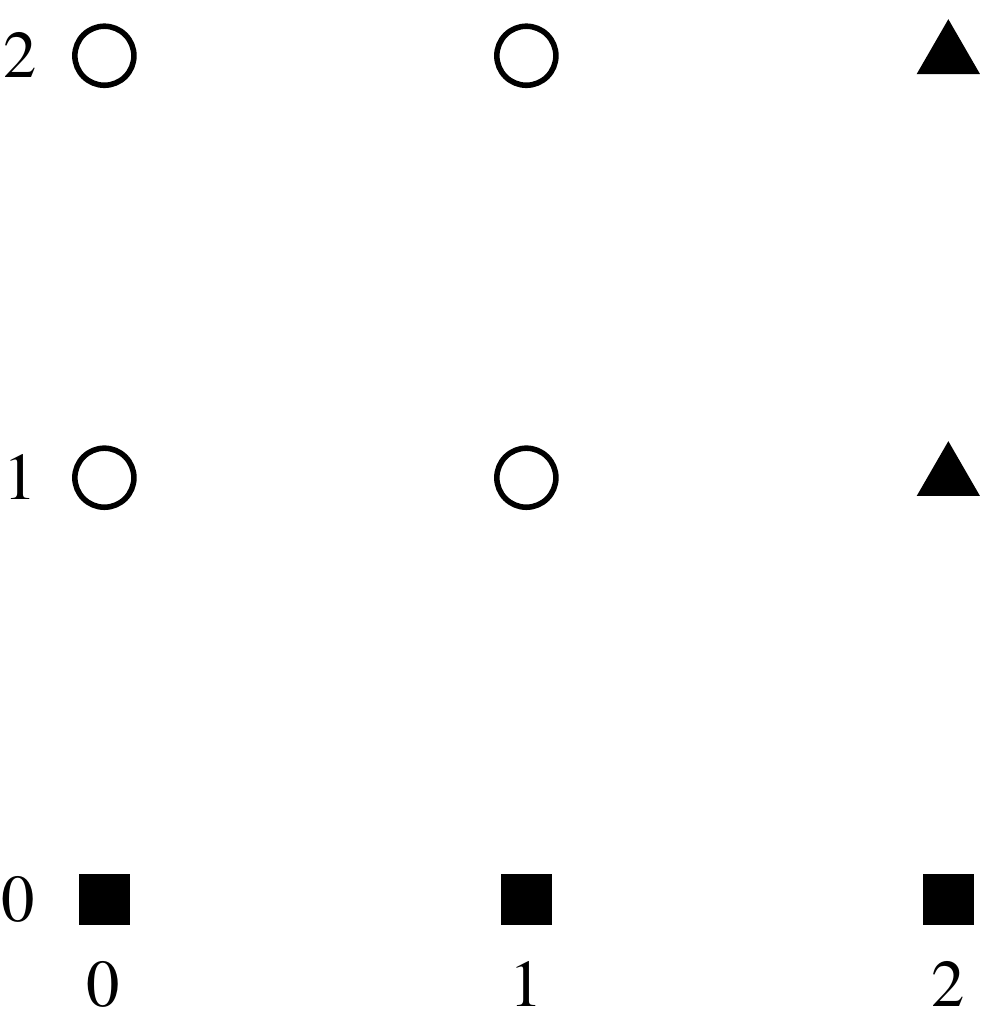}
  \caption{A rich $1$-colouring for $\mathbb F^2_3$ that is not distinguished}
  \label{fig:remark1}

\end{figure}

Figure~\ref{fig:remark1} shows another example for $\mathbb F^2_3$. The actual mapping is the left picture. Although the only line in $A\cup C$ is $\{01, 02, 03\}$, the colouring is not distinguished because another combination of five points, as on the right in the figure, contains two lines. The colouring is $2$-rich, as for every set $C$of two points ($c$--set) and every third point $x\notin C$ there are at least two lines passing through $x$ but not meeting $C$ (as through every point go four lines, two of which can be eliminated) and also at least two lines avoiding $C$ and $x$.

\end{exam}
}

\weg{
\begin{exam}\label{everything}
Consider the trivial $1$-colouring $\xi$ on the left-side image in Figure \ref{twentyfive} (all lines have the same colour). This colouring is distinguished for $A\cup C$, as this set only contains one line.

It is also rich. To see this, note that given any $c$-set $E$ and $x\not\in E$, there are five lines passing through $x$, and thus one of them avoids $E$. 

This covers one condition for richness; for the other, picking $y\not= x$ which is also not in $E$, we see that one line through $y$ avoids $\{x\}\cup E$, and thus there is a line avoiding both $x$ and $E$.
\end{exam}}

Rich and distinguished colourings are {\em almost} suitable, but we shall need an extra condition. Notice that Cath knows that Bob is to design the colouring so that it turns out to be informative. Thus, the colouring not only should be
distinguished for the actual set of cards $A \cup C$ but also for
every set of cards that Bob wants Cath to consider as possible. These will be the sets with the same {\em hue} as $A\cup C$:

\begin{definition}[Hue]
Let $\xi$ be a $k$-colouring.

Given $E,F\subseteq \mathbb F^d_p$, we write $E\approx^\xi_1 F$ if there are lines $\ell,h$ of
the same colour such that $\ell\subseteq E$, $h\cap (E\setminus
\ell)=\varnothing$ and $F=(E\setminus \ell)\cup h.$ We will say $E,F$ are {\em one swap apart}.

We then let $\approx^\xi$ be the reflexive and
transitive closure of $\approx^\xi_1$, and define a {\em hue} to be an equivalence class under $\approx^\xi$.
\end{definition}

\weg{Returning to Example \ref{everything}, our trivial one-colouring is rich and distinguished, but it has one further property; for suppose that $E$ has the same hue as $A\cup B$. Then, $E$ can only contain one line (since it has only eight points) and thus $\xi$ is {\em also} distinguished for $E$. Colourings with this property will be very useful and we shall say they are {\em very distinguished}.
}

\begin{definition}[Very distinguished colouring]
We say that a $k$--colouring $\xi$ is {\em very
distinguished} for $E \subseteq \mathbb F^d_p$ if $\xi$ is
distinguished for every $F$ of the same hue as $E$.
\end{definition}

\begin{exam}\label{exam:2colouring}
\begin{figure}[htbp!]
  \centering
  \includegraphics[width=3.5cm]{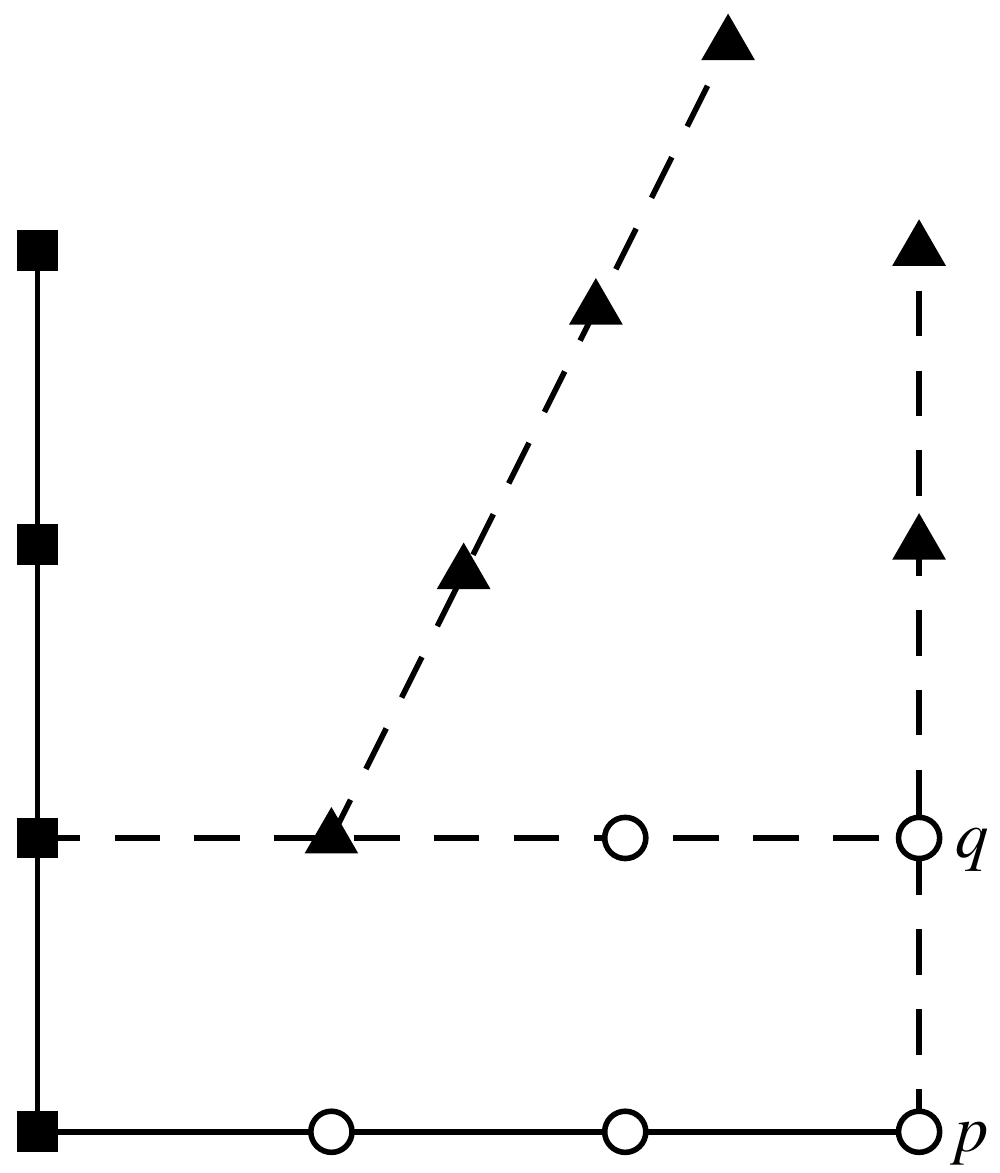}\hspace{0.5cm}
  \includegraphics[width=3.5cm]{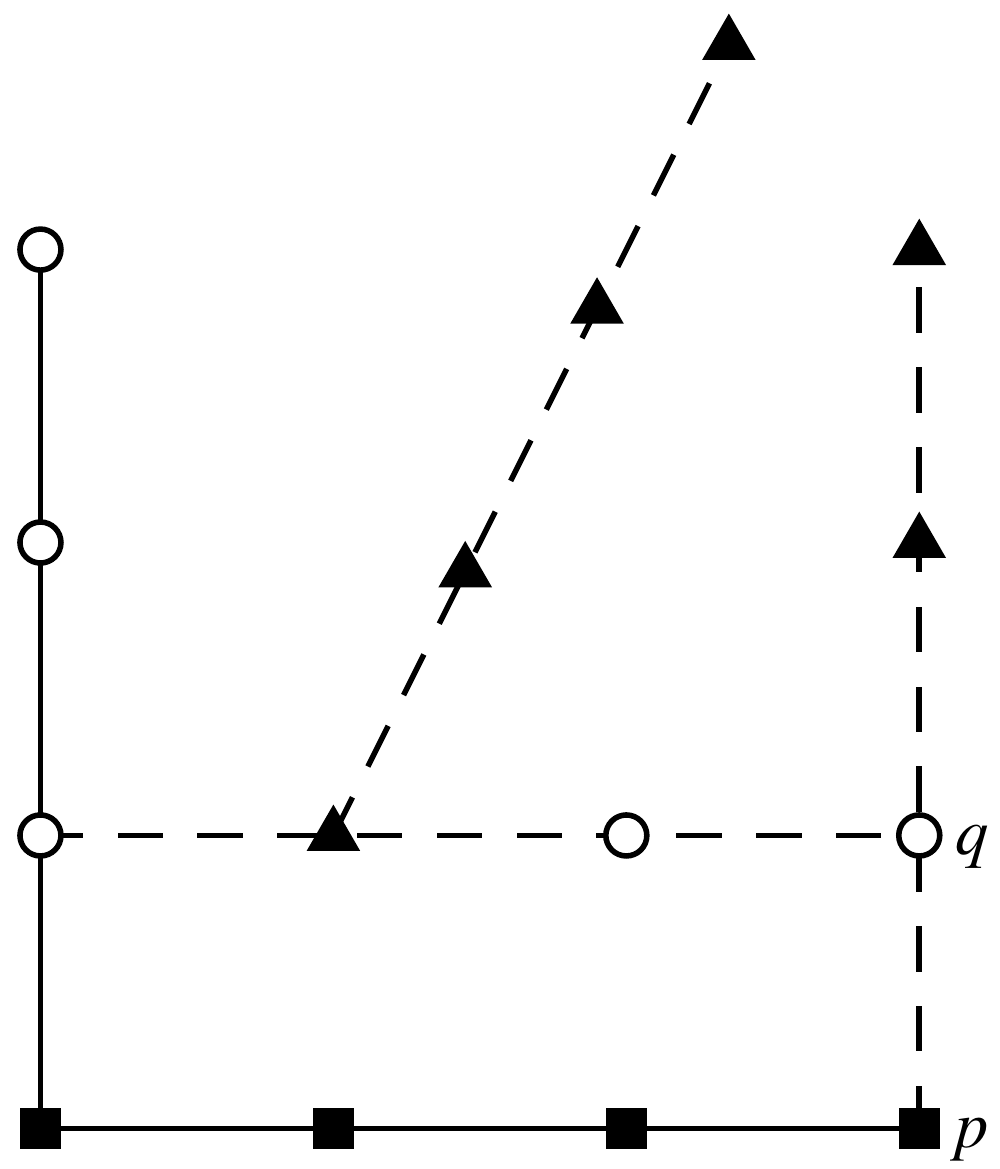}\hspace{0.5cm}
  \includegraphics[width=3.5cm]{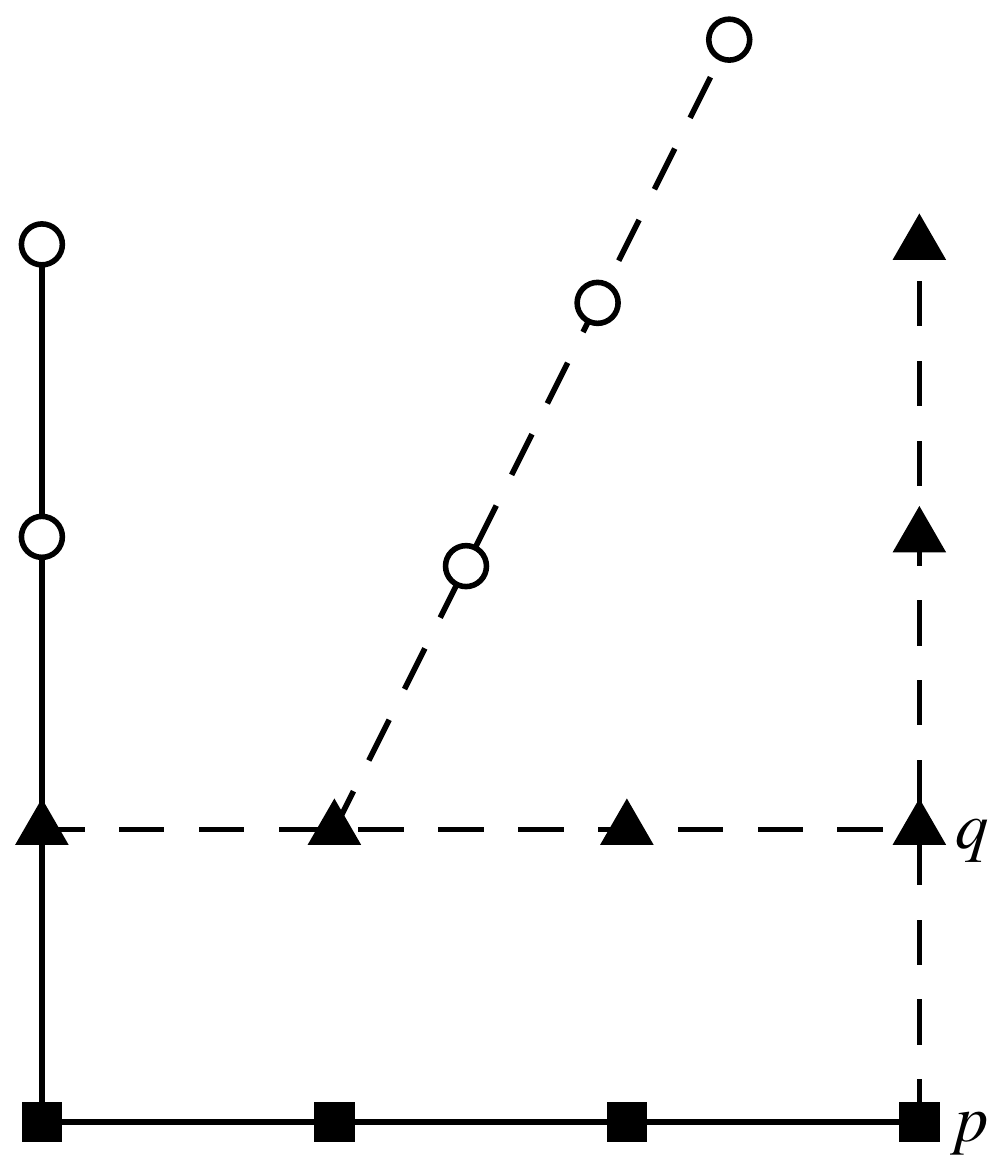}
  \caption{A $2$-colouring which is distinguished but not very
    distinguished}
  \label{fig:2colouring}
\end{figure}

Figure~\ref{fig:2colouring} shows an example of a $2$-colouring in
$\mathbb F^n_4$ for some $n$. We have represented only some points and
lines in the space. Let's suppose that the colouring is
$6$--rich. In the situation represented on the left, the
colouring is distinguished. $A\cup C$ contains two lines with
different colours. But the sets of points shown in the middle and right pictures are of the same hue as $A\cup C$, and the set of points in the right picture contains 
two lines of the same colour. Then our colouring is not distinguished in
the right picture, so it is not very distinguished in the left one.

This means that the presented $2$-colouring is not safe, even in the
left picture. Cath may learn that the points $p$ and $q$ belong to Bob,
reasoning as follows. Alice cannot have both points, because Alice's
points are aligned, but she could have one of them, as can be seen in the center picture. However, if it were the case that Alice has one point (say, $p$), Bob should
assign the line containing $p,q$ a different colour to the other two
lines in $A\cup C$, to keep Cath from learning that Bob has $q$. A similar reasoning applies to $q$, and thus Cath learns that Bob has both  $p$ and $q$.
\end{exam}

With these considerations we know which kind of colouring Bob should use: a suitably chosen colouring is rich and very distinguished, and as we shall see, the protocol so defined is safe and informative. With this we may give a formal definition of our protocol. Note that it depends on a parameter $k$ which will be specified later (see Sections \ref{line}, \ref{suitable}). Note also that the protocol is non-deterministic; recall that we are assuming all distributions to be uniform. 

\begin{definition}[Colouring protocol]\label{def.colprotocol}
Let $(a,b,c,k)$ be natural numbers such that $a$ is a prime power and for some natural number $d$, $a+b+c=a^d$. Let $D=\{1,2,\hdots,a+b+c\}$ and identify a natural number with (i) each bijection $f:D\to \mathbb F^d_a$ (ii) each $k$--colouring $\xi$, (iii) each of the $k$ colours and (iv) each element of $D\choose c$.

Let $(A,B,C)$ be a deal of size $(a,b,c)$. Then, the protocol is as follows.
\begin{enumerate}

\item Alice randomly chooses a bijection $f:D\to\mathbb F^d_a$ in such a way that her cards form a line. She announces $f$.

\item Bob randomly chooses a $k$--colouring $\xi$ which is rich and very distinguished  for $A\cup C$ and announces $\xi$.

\item Alice announces the colour of her hand.

\item Bob announces Cath's cards $C$.

\end{enumerate}

\end{definition}

Strictly speaking, when we state that ``Alice announces $f$'' it is understood that she announces the natural number assigned\footnote{Alice's first announcement does not correspond one-one to sets of hands that she may hold, for many mappings may define the same set of lines (just compose $f$ with an invertible linear transformation). A similar observation applies to Bob's colourings. This is why we do not model announcements as sets of hands, as is done, for example, in \cite{albertetal:2005}.} to the function $f$, and similarly to other steps in the protocol. Thus an execution can be represented in the form
\[(Alice,f),(Bob,\xi),(Alice,n),(Bob,X)\]
where $f:D\to \mathbb F^d_a$, $\xi$ is a colouring, $n$ a colour and $X\in {D\choose c}$. Since the players take turns we may omit the agents and write $(f,\xi,n,X)$.

With this in mind, let us see that the colouring protocol is indeed a protocol as defined in Section \ref{sec.some}. Intuitively, we must show that every announcement depends only on information available to the respective player.

\begin{lemm}
The colouring protocol satisfies Definition \ref{defprot}.
\end{lemm}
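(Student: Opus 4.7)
The plan is to spell out $j$ and $\pi$ explicitly for each of the four stages of the colouring protocol and to verify the three requirements of Definition \ref{defprot}: that $j$ is well defined, that $\pi$ returns a finite set, and that $\pi(\delta,\vec\alpha)$ depends on $\delta$ only through the equivalence class under $\stackrel{j(\vec\alpha)}{\sim}$. The function $j$ is immediate, as the protocol prescribes a fixed turn order Alice, Bob, Alice, Bob; thus $j(\vec\alpha)$ depends only on the length of $\vec\alpha$, not on $\delta$.

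For $\pi$, I walk through each stage. At step 1, $\pi(\delta,\varepsilon)$ is the set of bijections $f:D\to\mathbb F^d_a$ such that $f[A]\in\mathcal L^d_a$; this depends only on $A$ and is therefore Alice-invariant. At step 2, after a prefix $(Alice,f)$, I define $\pi(\delta,\vec\alpha)$ to be the set of $k$--colourings that are rich and very distinguished for $D\setminus B$; since $D\setminus B=A\cup C$, this matches Definition \ref{def.colprotocol}, and the description uses only $B$ and $f$, hence is Bob-invariant. At step 3, after a prefix $(Alice,f),(Bob,\xi)$, I put $\pi(\delta,\vec\alpha)=\{\xi(f[A])\}$; this is a singleton depending only on $A$ and the run.

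Step 4 requires a moment's thought. After a prefix $(Alice,f),(Bob,\xi),(Alice,n)$, Bob must announce $C$, yet formally $\pi$ must be defined using only data available to him. I set
\[\pi(\delta,\vec\alpha)=\bigl\{(D\setminus B)\setminus f^{-1}[\ell]\ :\ \ell\subseteq f[D\setminus B],\ \ell\in\mathcal L^d_a,\ \xi(\ell)=n\bigr\}.\]
This definition is manifestly Bob-invariant. To see it agrees with ``Bob announces $C$'' along legitimate executions, I invoke the fact that $\xi$ is very distinguished for $A\cup C$, hence in particular distinguished, so at most one line of colour $n$ lies in $f[A\cup C]$; the presence of $f[A]$ itself shows there is exactly one such line, giving $\pi(\delta,\vec\alpha)=\{C\}$.

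Finiteness of every $\pi$ is automatic, since each is a subset of a finite collection (bijections, colourings, colours, or $c$-subsets of $D$). The main obstacle is really just the bookkeeping at step 4, where one must take care to describe the set of Bob's possible announcements purely in terms of $B$ and the run, without slipping in a reference to $A$, $C$, or any hand that Bob does not actually see.
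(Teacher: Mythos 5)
Your proposal is correct and follows essentially the same route as the paper: turn order by parity of the run, step-by-step invariance, and for Bob's final announcement the same key fact that distinguishedness yields a unique line of the announced colour inside $f[A\cup C]$, which must be $f[A]$ since Alice's hand is a line. Your step~4 is merely a slightly more explicit packaging (defining $\pi$ directly in Bob-invariant terms and then collapsing it to $\{C\}$) of the argument the paper gives.
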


\begin{proof}
Denote the colouring protocol by $(j,\pi)$, where $j(\vec\alpha)=Alice$ if $\vec\alpha$ has even length and otherwise $j(\vec\alpha)=Bob$ and $\pi(\delta,\vec\alpha)$ is the set of possible announcements as specified by Definition \ref{def.colprotocol}. We need to verify that $\pi$ is invariant under $\stackrel{j(\vec\alpha)}\sim$. Let us do this step by step:

\paragraph{1} {\em Alice announces a map $f:D\to\mathbb F^d_p$ in such a way that her cards form a line.} Suppose $(A,B,C)\stackrel{Alice}\sim(A',B',C')$ and $f\in\pi((A,B,C),())$, so that $A=A'$ and $f$ maps $A$ into a line $\ell$. Then, $f(A')=f(A)=\ell$ and hence $f\in\pi((A',B',C'),())$, as required.

\paragraph{2} {\em Bob announces a rich and very distinguished $k$--colouring $\xi$.} Suppose that $(A,B,C)\stackrel{Bob}\sim(A',B',C')$ and $\xi\in\pi((A,B,C),(f))$, so that $\xi$ is a rich and very distinguished colouring for $f(A\cup C)$. But since $B=B'$ we also have $A\cup C=A'\cup C'$ and thus $\xi$ is also rich and very distinguished for $f(A'\cup C')$, so that $\xi\in \pi((A,B,C),(f))$, as required.

\paragraph{3} {\em Alice announces the colour of her hand according to $\xi$.} As before, suppose that $(A,B,C)\stackrel{Alice}\sim(A',B',C')$ and $n\in\pi((A,B,C),(f,\xi))$. Then, $\xi(f(A'))=\xi(f(A))=n$ so we have that $n\in\pi((A,B',C'),(f,\xi))$.

\paragraph{4} {\em Bob announces Cath's cards.} Once again suppose that $(A,B,C)\stackrel{Bob}\sim(A',B,C')$. By our specification $X\in\pi((A,B,C),(f,\xi,n))$ if and only if $X=C$, so we must show that if $((A',B,C'),(f,\xi,n))$ is an execution of the protocol then $C=C'$.

Since $\xi$ is distinguished there is a unique line $\ell\subseteq f(A\cup C)=f(A'\cup C')$ with $\xi(\ell)=n$. Since Alice's hand forms a line, we have $f^{-1}(\ell)=A=A'$. Thus $C=(D\setminus B)\setminus A=(D\setminus B)\setminus A' =C'$, as required.
\end{proof}

It remains to check that the colouring protocol indeed provides a solution to the generalized Russian cards problem:

\begin{theo}\label{ifex} The colouring protocol is safe and informative, provided it is
executable.
\end{theo}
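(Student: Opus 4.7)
The plan is to spell out what a length-$4$ execution $\vec\alpha=(f,\xi,n,X)$ looks like, use the specification of the protocol to see that necessarily $X=C$, and then handle informativity and safety separately. Informativity will only use that $\xi$ is distinguished for $f(A\cup C)$; safety is where the richness and ``very distinguished'' hypotheses do the real work.

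For informativity, Bob's step-$4$ announcement is by the definition of the protocol equal to $C$ itself, so once the execution is complete Alice recovers $B=D\setminus(A\cup C)$ and Bob recovers $A=D\setminus(B\cup C)$. The only thing to verify is that step $4$ is well-defined, i.e.\ that Bob can actually compute $C$ from $(f,\xi,n)$ and his own hand $B$. He knows $f(A\cup C)=\mathbb F^d_a\setminus f(B)$, and since $\xi$ is distinguished for $f(A\cup C)$ there is at most one line of colour $n$ contained in that set; it must be $f(A)$, which pins down $A$ and hence $C$.

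For safety, fix any card $x\notin C$ and let $n$ be the colour Alice announced. Apply the richness of $\xi$ to the $c$-set $f(C)$ and the colour $n$ to obtain an $n$-coloured line $\ell$ through $f(x)$ avoiding $f(C)$, and an $n$-coloured line $\ell'$ not through $f(x)$ and avoiding $f(C)$. Put $A'=f^{-1}(\ell)$ and $A''=f^{-1}(\ell')$, and let $B',B''$ be the corresponding complements in $D\setminus C$; sizes work out to $(a,b,c)$ because $|\ell|=|\ell'|=a$. By construction $x\in A'$ and $x\in B''$, so the candidate deals $(A',B',C)$ and $(A'',B'',C)$ are exactly the witnesses demanded by Definition \ref{w1s}. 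It remains to check that both are consistent with the transcript $(f,\xi,n,C)$.

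This consistency is the main obstacle, and it is dispatched by a single swap. For step $1$, $f(A')=\ell$ and $f(A'')=\ell'$ are lines, so $f$ is a legal first announcement for both alternative deals; for step $3$, both lines have colour $n$; richness is a property of $\xi$ alone, so it automatically holds in the alternative scenarios. The delicate point is that $\xi$ must be very distinguished for $f(A'\cup C)$ and for $f(A''\cup C)$. Here set $h_1=f(A)$ and $h_2=\ell$: both are coloured $n$, $h_1\subseteq f(A\cup C)$, and $h_2\cap(f(A\cup C)\setminus h_1)=\ell\cap f(C)=\varnothing$, while $(f(A\cup C)\setminus h_1)\cup h_2=f(C)\cup\ell=f(A'\cup C)$. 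Thus $f(A\cup C)\approx^\xi_1 f(A'\cup C)$, so these two sets lie in the same hue, and being very distinguished for $f(A\cup C)$ transfers verbatim to being very distinguished for $f(A'\cup C)$. The identical swap argument with $\ell'$ handles $f(A''\cup C)$. Finally, step $4$ in each alternative deal requires Bob to uniquely reconstruct $C$, which is just the informativity argument above applied to the alternative set; this works because the alternative set is also distinguished. Safety then follows.
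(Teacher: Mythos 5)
Your proof is correct and follows essentially the same route as the paper's: richness of $\xi$ supplies the two witness lines, and the one-swap observation $f(A\cup C)\approx^\xi_1 \ell\cup f(C)$ is exactly how the paper transfers the ``very distinguished'' property to the alternative deals. You are somewhat more careful than the paper in checking that every step of the transcript remains consistent for the alternative deals (sizes, step 1, step 3, step 4), which is a welcome level of detail but not a different argument.
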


\proof  The protocol is obviously informative given Bob's last announcement, so we focus on safety. Let $\vec\alpha=(f,\xi,n,C)$ be an execution of the colouring protocol.

First pick $x$ that Cath does not hold. Because $\xi$ is rich,
there is a line $\ell$ with colour $i$ passing through $f(x)$ and not meeting
$C$. Further, note that $f(A) \cup f(C) \approx^\xi_1 \ell \cup f(C)$, so that
$\xi$ is also very distinguished for $\ell\cup f(C)$. Then, setting $A'=f^{-1}(\ell)$ we see that $((A',B',C),\vec\alpha)$ is also an execution of the protocol, but $x\in A'$.

The argument that Cath also considers it possible that Alice does not hold $x$ is very similar.
Since the set $C$ has exactly $c$ elements, there is a line $h$ with colour $i$ not
meeting $f(C)\cup\cbra f(x)\cket$, so that setting $A''=f^{-1}(h)$ and reasoning as above, $((A'',B'',C),\vec\alpha)$ is another execution of the protocol where Bob holds $x$.

Thus the colouring protocol is safe.
\endproof

We have so far worked under the assumption that the protocol is executable. The rest of the paper will be devoted to examining when this is the case, and determining the parameter $k$.

\section{Executability}\label{line}

In what follows we will look for conditions to guarantee the existence of a rich and very distinguished $k$--colouring for some number of
colours $k$. First, let us introduce the notion of {\em density:}

\begin{definition}[Density]
Say a $k$--colouring $\xi$ has {\em density} $m$
if, given a point $x \in \mathbb F_p^d$ and a colour $i$, there are
at least $m$ $i$--coloured lines through $x$.
\end{definition}

There is a close connection between density and richness; to be precise, a colouring that is dense enough is automatically rich.

\begin{lemm}\label{notmeeting}
If $\xi$ is a $k$--colouring of density $c+2$, then $\xi$ is
rich.
\end{lemm}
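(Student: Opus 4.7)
The plan is to establish each of the two clauses of Definition \ref{defrich} by a short counting argument, choosing an appropriate base point and tallying $i$-coloured lines through it.

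For the first clause, fix a $c$-set $C$, a colour $i$, and a point $x \notin C$, and take $x$ itself as the base point. The density hypothesis gives at least $c+2$ $i$-coloured lines through $x$. Since any two distinct points of $\mathbb F^d_p$ determine at most one line, each point of $C$ lies on at most one of these lines, so at most $|C|=c$ of the $c+2$ lines meet $C$. Hence at least two $i$-coloured lines through $x$ avoid $C$, and either of them witnesses the first clause.

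For the second clause, I need an $i$-coloured line avoiding $C \cup \{x\}$, a set of $c+1$ points. I choose some $y \notin C \cup \{x\}$ as the base point and repeat the argument: there are at least $c+2$ $i$-coloured lines through $y$, and at most $c+1$ of them meet $C \cup \{x\}$, so at least one line of colour $i$ through $y$ avoids $C \cup \{x\}$ entirely. Such a line has colour $i$, does not contain $x$, and does not meet $C$, as required.

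The one step not immediate from density is verifying that the base point $y$ above actually exists; this is the main (mild) obstacle. It comes for free from the density hypothesis itself: there are only $\sigma_d(p)=(p^d-1)/(p-1)$ lines through any point of $\mathbb F^d_p$, so having $c+2$ of them be $i$-coloured forces $c+2 \leq (p^d-1)/(p-1)$, hence $p^d \geq (p-1)(c+2)+1 \geq c+3$ since $p \geq 2$. In particular $|\mathbb F^d_p| > c+1 = |C \cup \{x\}|$, so a suitable $y$ is always available and the argument for the second clause goes through.
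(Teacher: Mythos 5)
Your proof is correct and follows essentially the same argument as the paper: count the $i$-coloured lines through a base point, observe that distinct lines through that point are disjoint away from it, and conclude that at most $c$ (resp.\ $c+1$) of the $c+2$ lines can meet $C$ (resp.\ $C\cup\{x\}$). Your extra verification that the auxiliary point $y$ exists is a detail the paper leaves implicit, but it does not change the approach.
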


\proof Let $E$ be a subset of $\mathbb F_p^d$ with $c$ elements. Fix
a colour $i$ and a point $x \not \in E$. We note that if $\ell,h$ are
two distinct lines passing through $x$, then $\ell\cap h=\cbra
x\cket$. Therefore, $E$ contains the disjoint union
\[\bigcup \{E\cap \ell:x\in\ell\text{ and }\xi(\ell)=i\}.\]
It follows that for some $\ell$ with $\xi (\ell)=i$ and $x\in \ell$,
$E\cap \ell$ must be empty, otherwise $E$ would have at least $c+2$
elements.

Similarly, if we pick $y\not\in E$ different from $x$, we see that there is an $i$-coloured line through $y$ not meeting $E\cup\{x\}$, satisfying the second requirement.
\endproof

Thus in order to construct rich colourings, we may focus on constructing dense colourings; this is not too difficult, as witnessed by the following:

\begin{lemm}\label{knitlemm}
Let $\ell_1,...,\ell_k$ be distinct lines of $\mathbb
F^d_p$ and assume that $\sigma_d(p) \geq k(m+1).$ Then, there is a
$k$--colouring $\xi$ of density $m$ such that for $i\leq k$,
$\xi(\ell_i)=i$.
\end{lemm}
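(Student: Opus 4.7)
The plan is to construct $\xi$ explicitly by exploiting the parallel class structure of $\mathbb F^d_p$. The lines of $\mathbb F^d_p$ partition into exactly $\sigma_d(p)$ parallel classes, each consisting of $p^{d-1}$ pairwise disjoint lines that cover $\mathbb F^d_p$; in particular, through every point there passes exactly one line of each parallel class. Consequently, any colouring in which every line inherits the colour assigned to its parallel class has, at each point, precisely as many $i$-coloured lines as there are classes assigned colour $i$.

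For each $i\leq k$, let $\Pi_i$ denote the parallel class containing $\ell_i$; these need not be distinct. Let $Q_1,\dots,Q_s$ enumerate the distinct values among $\Pi_1,\dots,\Pi_k$, and for each $r\leq s$ fix a representative index $j_r^*$ with $\Pi_{j_r^*}=Q_r$. Using $\sigma_d(p)\geq k(m+1)$, I partition the set of all parallel classes into $k$ blocks $\mathcal G_1,\dots,\mathcal G_k$ with $|\mathcal G_i|\geq m+1$ for each $i$ and with $Q_r\in\mathcal G_{j_r^*}$ for every $r$: this is feasible because after placing the $s\leq k$ special classes $Q_r$ in their prescribed blocks the remaining $\sigma_d(p)-s\geq k(m+1)-s$ classes suffice to fill each block up to size $m+1$. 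I then define $\xi$ by first setting $\xi(\ell)=i$ whenever the class of $\ell$ belongs to $\mathcal G_i$, and afterwards overriding $\xi(\ell_j)=j$ for $j=1,\dots,k$.

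For the density verification, fix a point $x$ and a colour $i$. Under the initial class-based colouring the number of $i$-coloured lines through $x$ equals $|\mathcal G_i|\geq m+1$. The override can only decrease this count, and it does so exactly by the number of indices $j\neq i$ with $x\in\ell_j$ and $\Pi_j\in\mathcal G_i$. By construction every $\Pi_j$ lies only in $\mathcal G_{j_r^*}$ where $r$ indexes the class of $\Pi_j$, so $\Pi_j\in\mathcal G_i$ forces $\Pi_j=\Pi_i$; any such $\ell_j$ is therefore parallel to (and distinct from) $\ell_i$, and by disjointness of parallel lines at most one of them can pass through $x$. Hence the count drops by at most $1$ and remains at least $m$.

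The main obstacle is precisely that the map $i\mapsto\Pi_i$ can fail to be injective, so the naive wish of placing each $\Pi_i$ in its own $\mathcal G_i$ is impossible. The resolution is the observation above: whenever several pre-coloured lines share a parallel class, they are mutually parallel and hence pairwise disjoint, so at any fixed point at most one of them can interfere with the count, and a single unit of slack in $|\mathcal G_i|\geq m+1$ absorbs this interference.
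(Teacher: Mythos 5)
Your proof is correct and follows essentially the same route as the paper's: both assign parallel classes (equivalently, directions) to the $k$ colours and exploit the fact that exactly one line of each class passes through any given point. The only difference is bookkeeping — the paper excludes the (at most $k$) directions of the prescribed lines and distributes $mk$ of the remaining directions into $k$ groups of size $m$, whereas you partition all classes into blocks of size $m+1$ and let the unit of slack absorb the override; the two devices are interchangeable.
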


\proof There are $\sigma_d(p)\geq k(m+1)$ non-collinear vectors in
$\mathbb F_p^d$, and hence we can pick a set $D$ of $mk$ non--collinear
vectors which are not directing vectors of any of the lines
$\ell_i$. Partition $D$ into $k$ disjoint sets $D_i$ of $m$
elements. Then, given a line $h \in \mathbb F_p^d$, put $\xi(h)=i$ if
either $h=\ell_i$ or $h$ has directing vector in $D_i$ for some $i
\leq k$. Otherwise put, for instance, $\xi(h)=1$. It is easy to see
that $\xi$ satisfies the desired properties.
\endproof

This will be sufficient for finding rich colourings. Now our goal is to construct a very distinguished $k$--colouring for $A \cup C$. As with richness, we will do so by introducing a stronger, approximate notion -- that of a {\em perfect colouring}. It is not easy to tell under which conditions very distinguished colourings exist or how one may go about finding one, but finding perfect colourings will be straightforward.

The notion of a perfect colouring has the disadvantage that perfection is not hue-invariant. To deal with this issue, we will first need an intermediate concept: that of {\em critical colourings.} Every perfect colouring is critical and every critical colouring is very distinguished. Perfect colourings are the easiest to identify, but critical colourings are easier to work with than either perfect or very distinguished colourings.

Let us then begin by defining critical colourings:

\begin{definition}[Critical colouring]\label{def:criticalCol}
Given $E\subseteq \mathbb F_p^d$, we say that a $k$-colouring $\xi$
is \emph{critical} for $E$ if there exists a set $L =\{ \ell^\ast_1,
\, \dots, \, \ell^\ast_n \}$ of different colours such that,
for every line $h\subseteq \mathbb F_p^d$, we have
\[\left|(h\cap E)\setminus \displaystyle\bigcup_{i\leq n}\ell^\ast_i\right|<p-k,\]
 where $|S|$ stands
for the cardinality of a set $S$. We will say the lines in $L$ are
{\em $\xi$-critical lines} for $E$.
\end{definition}

The notion of critical colouring in Definition~\ref{def:criticalCol}
captures an important difference between the $2$-colourings in 
Figures~\ref{fig:2colouring} and~\ref{fig:2colVery}. The $2$-colouring
in Figure~\ref{fig:2colouring} is not critical for $E=A\cup C$, as for any set of lines $L$ that we can select, there is a line $h$ such that \[\left|(h\cap E)\setminus \bigcup
L\right| \geq 2=p-k,\]
as the reader may verify by examination.

On the other hand, the colouring in the left-hand side of Figure~\ref{fig:2colVery} is
critical for $E=A\cup C$, the union of the two lines and the fragment. We select $L$ as
the two complete lines in $E$. Then, for any line $h$,
the property of Definition~\ref{def:criticalCol} holds. In particular, the line $h$
with the three holes is the line outside of $L$ with most points in $E$. For that line,
\[\left|(h\cap E)\setminus \bigcup L\right| =p-3 < p-2.\]

  Recall Example~\ref{exam:2colouring}, where we showed that the colouring in Figure~\ref{fig:2colouring} is   unsafe. Intuitively, the insecurity of a $k$-colouring (in  Figure~\ref{fig:2colouring} we have $k=2$) is produced because $A\cup
  C$ contains $k$ lines, along with a line fragment $\ell$ with $k$ or less gaps -- observe the line missing $p$ and $q$ in
  Figure~\ref{fig:2colouring}. Then, though the $k$-colouring is
  distinguished, it might not be very distinguished, as we may be able to use several swaps to move the $k$ lines and fill all the gaps in $\ell$, thus generating $k+1$ lines and repeating a colour.

However, when $A\cup C$ contains $k$ or less lines and all fragments have
more than $k$ points missing, then a distinguished $k$-colouring $\xi$ is always
very distinguished. Figure~\ref{fig:2colVery} shows an example of this
situation.

\begin{figure}[htbp!]
  \centering
  \includegraphics[width=4.5cm]{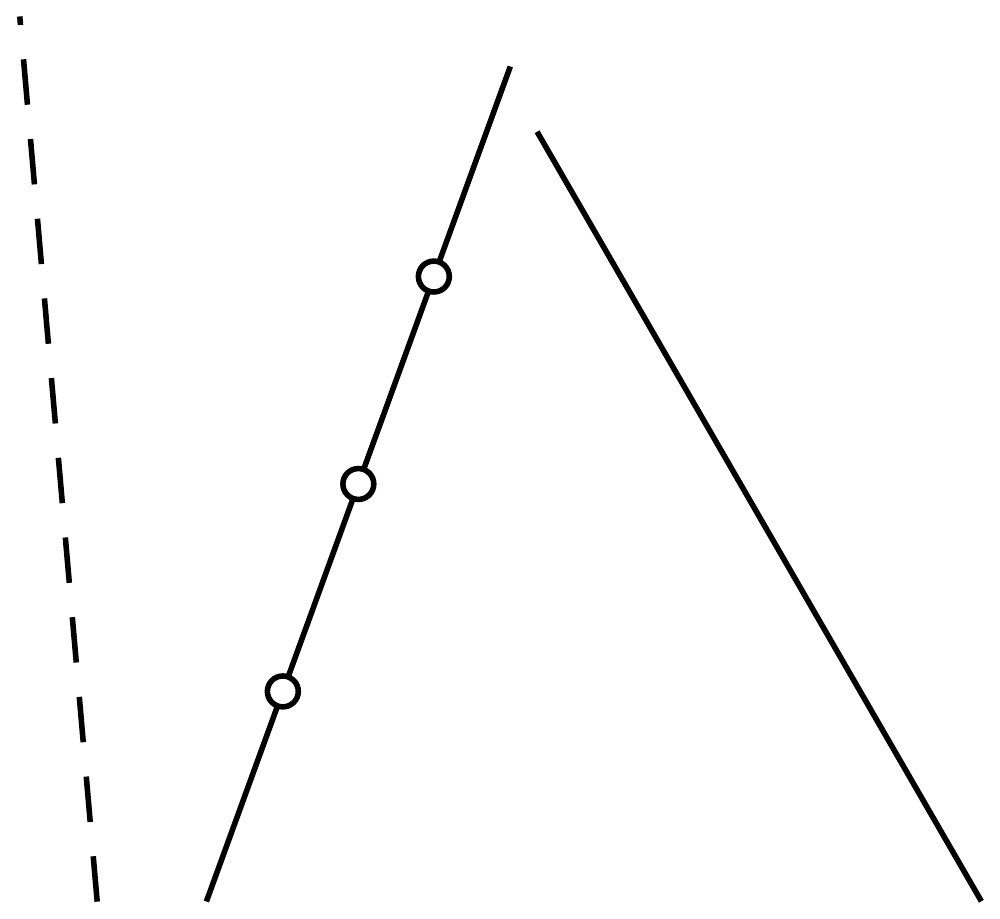}\hspace{0.2cm}
  \includegraphics[width=4.5cm]{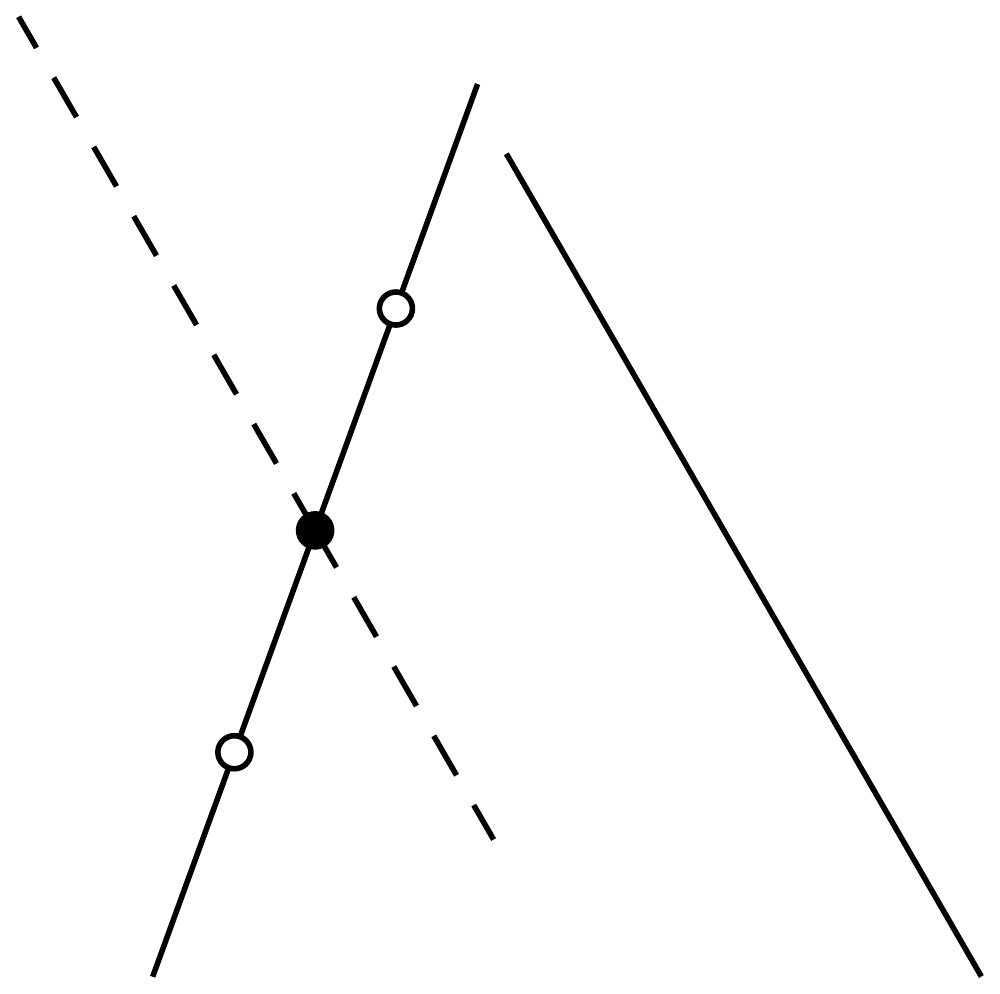}\hspace{0.2cm}
\includegraphics[width=2.5cm]{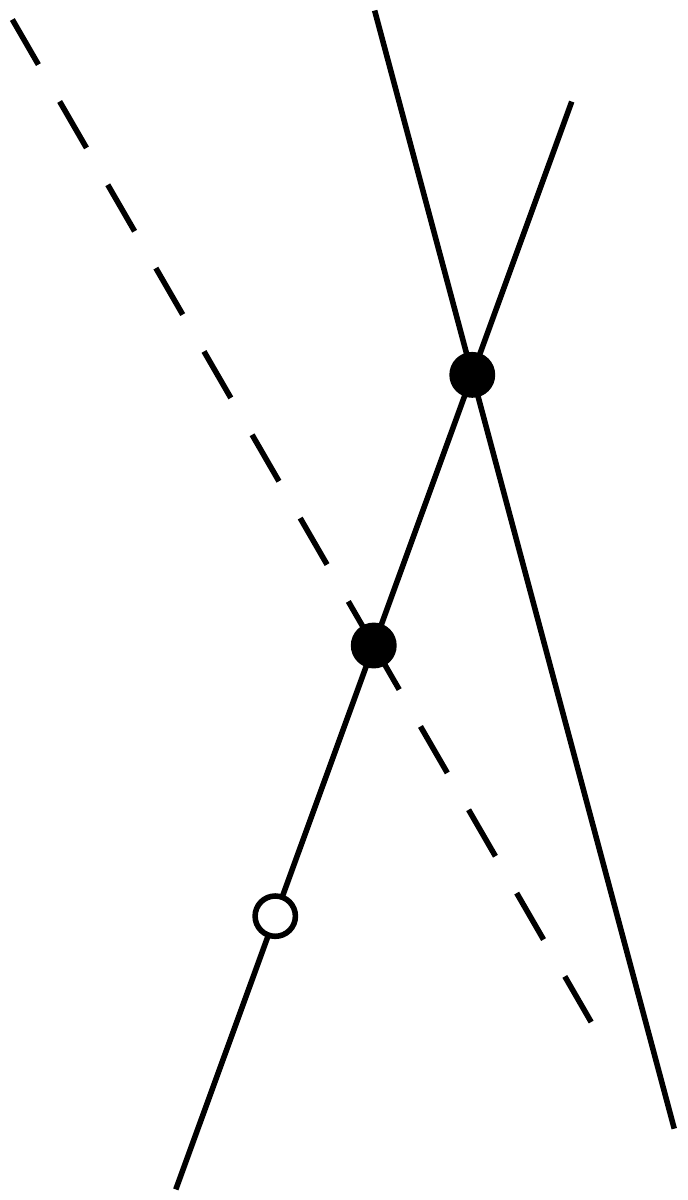}
  \caption{A critical $2$-colouring}
  \label{fig:2colVery}
\end{figure}

\begin{exam} In the left-hand side of Figure~\ref{fig:2colVery}, $A\cup C$ contains two lines and a fragment with three or more points missing. Alice has one of these
lines, and Cath has the other line plus the segment. Then, a swap corresponds to moving a line to an empty position of the same colour, and it is not
possible to arrange the two lines in $A\cup C$ in such a way that they fill
the three gaps in the fragment. Thus there is no configuration of the same hue for which $\xi$ is not distinguished, so that $\xi$ is very distinguished.
\end{exam}

The above considerations suggest that critical colourings are very distinguished, and indeed this will turn out to be the case. Before we show this, let us make a simple observation:

\begin{lemm}\label{isexc}
If $\xi$ is a critical $k$--colouring for $E$ and $\ell\subseteq E$,
then $\ell$ is a $\xi$--critical line for $E$, independently of how the other critical lines are chosen.
\end{lemm}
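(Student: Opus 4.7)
The plan is to prove the contrapositive-style observation that $\ell$ must in fact belong to \emph{every} witnessing set $L$ for the criticality of $\xi$ on $E$, not merely some particular one. Once this is established, the phrase ``independently of how the other critical lines are chosen'' is immediate: whatever valid set $L$ one writes down to witness criticality, it automatically includes $\ell$.

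To carry this out, I would fix an arbitrary witnessing set $L = \{\ell^\ast_1, \ldots, \ell^\ast_n\}$ as in Definition~\ref{def:criticalCol} and argue by contradiction, assuming $\ell \subseteq E$ but $\ell \notin L$. The key geometric fact to invoke is that two distinct lines in $\mathbb F^d_p$ meet in at most one point. Applied to our situation, since each $\ell^\ast_i$ is distinct from $\ell$, we get $|\ell \cap \ell^\ast_i| \leq 1$, and hence
\[\left|\ell \cap \bigcup_{i \leq n}\ell^\ast_i\right| \leq n \leq k,\]
where the bound $n \leq k$ follows from the fact that the $\ell^\ast_i$ carry pairwise distinct colours out of only $k$ available colours.

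The contradiction is then extracted by taking $h := \ell$ in the defining inequality of criticality. Because $\ell \subseteq E$, we have $h \cap E = \ell$, a set of $p$ points, so
\[\left|(h \cap E) \setminus \bigcup_{i \leq n} \ell^\ast_i\right| \;=\; p - \left|\ell \cap \bigcup_{i \leq n}\ell^\ast_i\right| \;\geq\; p - k,\]
which violates the strict inequality $< p - k$ required of a critical colouring. Therefore $\ell \in L$, and since $L$ was arbitrary, $\ell$ is a $\xi$-critical line for $E$ in every witnessing set.

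I do not anticipate any serious obstacle: the argument is essentially a one-line application of the ``two distinct lines meet in at most one point'' identity combined with the counting constraint $n \leq k$. The only subtlety worth flagging is making explicit that the bound $n \leq k$ is forced by the distinct-colours requirement in Definition~\ref{def:criticalCol}, since this is precisely what makes the inequality $p - n \geq p - k$ work and produces the needed contradiction.
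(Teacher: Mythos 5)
Your proof is correct and follows essentially the same route as the paper's: take $h=\ell$ in the defining inequality, use that distinct lines meet in at most one point to bound $\bigl|\ell\cap\bigcup_{i\leq n}\ell^\ast_i\bigr|\leq n\leq k$, and derive the contradiction $\bigl|(\ell\cap E)\setminus\bigcup L\bigr|\geq p-k$. Your explicit remark that the argument applies to \emph{every} witnessing set $L$ is exactly what the paper means by ``independently of how the other critical lines are chosen.''
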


\proof Let $L=\{\ell^\ast_1 , \dots , \ell^\ast_n\}$ be $\xi$-critical
lines for $E$. Notice that since they are lines of different colours,
we have $n \leq k$. If $\ell$ is not $\xi$--critical then for every
$i\leq n$, $\ell\cap \ell^\ast_i$ is either empty or a singleton. It
therefore follows that

\[\left|(\ell\cap E)\setminus\bigcup L\right| =
\left| \ell\setminus \displaystyle\bigcup_{i\leq n}\ell^\ast_i
\right| \geq p - n \geq p-k,\]
which contradicts the definition of a critical $k$--colouring.
\endproof

Now we will check that critical colourings are very distinguished. For this, it suffices to show that they are distinguished and that a critical colouring for $E$ is critical for the entire hue of $E$; the former is straightforward, the latter rather involved.

\begin{lemm}\label{placeisnet}
Every critical $k$--colouring for $E$ is distinguished for $E$.
\end{lemm}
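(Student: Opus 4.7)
The plan is a short argument by contradiction leveraging Lemma \ref{isexc}. Assume, toward a contradiction, that $\xi$ is a critical $k$--colouring for $E$ yet not distinguished for $E$. By failure of the distinguished property, there exist two distinct lines $\ell_1, \ell_2 \subseteq E$ with $\xi(\ell_1)=\xi(\ell_2)$.

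Since $\xi$ is critical for $E$, fix any witnessing set $L=\{\ell^\ast_1,\dots,\ell^\ast_n\}$ of lines of pairwise different colours satisfying the inequality in Definition \ref{def:criticalCol}. By Lemma \ref{isexc}, both $\ell_1$ and $\ell_2$ must appear as $\xi$--critical lines for $E$, that is, $\ell_1,\ell_2\in L$ (up to the choice of such an $L$, which is immaterial by the last clause of Lemma \ref{isexc}). But the lines of $L$ are required to have pairwise different colours, so this forces $\xi(\ell_1)\neq \xi(\ell_2)$, contradicting our choice of $\ell_1$ and $\ell_2$.

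There is no real obstacle here: all the combinatorial work has already been carried out in Lemma \ref{isexc}, and the present lemma is essentially its immediate corollary. The only point worth emphasising in the writeup is that the ``different colours'' clause in the definition of a critical colouring is the key ingredient that converts the inclusion of every line of $E$ into the critical set into the distinguished property.
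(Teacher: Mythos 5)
Your proof is correct and takes essentially the same route as the paper's: both invoke Lemma \ref{isexc} to force every line contained in $E$ into the set of $\xi$--critical lines, and then use the pairwise-distinct-colours requirement in Definition \ref{def:criticalCol} to rule out two same-coloured lines inside $E$. The only difference is that you phrase it as a contradiction while the paper argues directly.
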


\proof From Lemma \ref{isexc} it follows that if $\ell\subseteq E$
then it is critical; but there is at most one critical line of each
colour so there cannot be $h\not=\ell$ of the same colour both
contained in $E$.
\endproof

\begin{lemm}\label{break}
Suppose $\xi$ is a critical $k$-colouring for $E$ with $k< p$ and $F$ has the same hue as $E$. Then,
$\xi$ is also a critical colouring for $F$.
\end{lemm}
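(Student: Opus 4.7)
The plan is to reduce to the one-swap case and perform a direct ``swap'' of the critical lines, mirroring the swap that takes $E$ to $F$.

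Since $\approx^\xi$ is the reflexive-transitive closure of $\approx^\xi_1$, it suffices to prove the lemma under the assumption $E \approx^\xi_1 F$ and then iterate along a chain of single swaps connecting $E$ to $F$. So, fix lines $\ell, h$ of the same colour witnessing $E \approx^\xi_1 F$: $\ell \subseteq E$, $h \cap (E \setminus \ell) = \varnothing$, and $F = (E \setminus \ell) \cup h$. Let $L = \{\ell^\ast_1, \ldots, \ell^\ast_n\}$ be any set of $\xi$--critical lines for $E$. Since $\ell \subseteq E$, Lemma \ref{isexc} forces $\ell \in L$, so after reindexing we may take $\ell = \ell^\ast_1$.

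The natural candidate for a set of critical lines for $F$ is then $L' = \{h, \ell^\ast_2, \ldots, \ell^\ast_n\}$. First I would check that this is genuinely a set of lines whose colours are all distinct: $h$ shares its colour with $\ell = \ell^\ast_1$, and by hypothesis $\ell^\ast_2, \ldots, \ell^\ast_n$ have colours different from that of $\ell^\ast_1$, hence different from the colour of $h$; so in particular $h \ne \ell^\ast_i$ for $i \geq 2$, and $L'$ consists of $n$ lines with $n$ distinct colours. (Here the hypothesis $k < p$ is used in the background to guarantee that the critical notion is non-vacuous.)

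The main computation is then to verify that for every line $g \subseteq \mathbb F_p^d$,
\[
\left| (g \cap F) \setminus \bigcup L' \right| < p - k.
\]
Write $M = \ell^\ast_2 \cup \cdots \cup \ell^\ast_n$. Since $F = (E \setminus \ell) \cup h$, we decompose $g \cap F = (g \cap (E \setminus \ell)) \cup (g \cap h)$; the second piece lies inside $h \subseteq \bigcup L'$ and so disappears after subtracting $\bigcup L' = h \cup M$. Thus
\[
(g \cap F) \setminus \bigcup L' \; = \; (g \cap (E \setminus \ell)) \setminus (h \cup M) \; \subseteq \; (g \cap (E \setminus \ell)) \setminus M \; = \; (g \cap E) \setminus \bigcup L,
\]
and the bound $|(g \cap E) \setminus \bigcup L| < p - k$, which holds because $L$ is $\xi$--critical for $E$, transfers directly to $F$.

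I do not foresee a serious obstacle. The only delicate point is bookkeeping: making sure that after the swap, $L'$ is still a set of lines of pairwise different colours (handled above via the colour of $h$), and that iterating the argument along an arbitrary chain witnessing $E \approx^\xi F$ is legitimate, which follows because at each step the chosen critical set can be updated via the same swap construction.
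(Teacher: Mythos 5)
Your proposal is correct and follows essentially the same route as the paper: reduce to a single swap (the paper phrases this as induction on the number of swaps), use Lemma \ref{isexc} to locate $\ell$ among the critical lines, replace it by $h$, and establish the inclusion $(g\cap F)\setminus\bigcup L'\subseteq (g\cap E)\setminus\bigcup L$. Your set-theoretic chain of equalities is a slightly cleaner way of carrying out the element-chase the paper performs, but the argument is the same.
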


\proof We shall prove this by induction on the number of
swaps between $E$ and $F$. The base case ($F=E$) is vacuous. For the inductive
step, suppose $F=(G\setminus \lambda_1)\cup \lambda_2$ with
$\xi(\lambda_1)=\xi(\lambda_2)=j$ and $G$ one swap closer to $E$ than $F$ is, so that by induction we may assume that there are $\xi$-critical
lines $\ell^\ast_1,\hdots,\ell^\ast_n$ for $G$ with $n \leq k$. Observe that $\ell^\ast_j=\lambda_1$; indeed, $\lambda_1\subseteq G$, so that by Lemma \ref{isexc}, $\lambda_1=\ell^\ast_i$ for some $i$. But $\xi(\lambda_1)=j$, and therefore $i=j$.

Our goal is to show that $\xi$ is also critical for $F$; we will do this by proving that the lines $h^\ast_1,\hdots,h^\ast_n$ defined by $h^\ast_i=\ell^\ast_i$ for $i\not=j$ and
$h^\ast_j=\lambda_2$ are critical. For this it will suffice to check that if $\ell$ is any line,
\begin{equation}\label{claim}
(\ell\cap F)\setminus\bigcup_{i\leq n}h^\ast_i\subseteq (\ell\cap G)\setminus\bigcup_{i\leq n}\ell^\ast_i;
\end{equation}
this inclusion would then imply that
\[\left|(\ell\cap F)\setminus\bigcup_{i\leq n}h^\ast_i\right|\leq\left|(\ell\cap G)\setminus\bigcup_{i\leq n}\ell^\ast_i\right|<p-k,\]
showing that $h^\ast_1,\hdots,h^\ast_n$ (and hence $\xi$) are critical for $F$.

To establish (\ref{claim}), pick $x\in(\ell\cap F)\setminus \bigcup_{i\leq n}h^\ast_i$. It is obvious that $x\in \ell\cap G$, except perhaps in the case that $x\in \lambda_2$. But $\lambda_2=h^\ast_j$, so this cannot be, and therefore we always have $x\in \ell\cap G$. 

Next, we need to check that $x\not\in \ell^\ast_i$ for any $i\leq k$. This is obvious if $i\not=j$, since $\ell^\ast_i=h^\ast_i$. Thus it remains to rule out that $x\in\ell^\ast_j$. If we had $x\in\ell^\ast_j=\lambda_1$, we would also have that $x\in \lambda_1\setminus h^\ast_j=\lambda_1\setminus \lambda_2$. But $F$ does not intersect this set, so this is impossible.

We conclude that (\ref{claim}) holds, and thus $\xi$ is critical for $F$, as desired.
\endproof

As promised, we now have the following:

\begin{lemm}\label{critisdis}
Every critical colouring is very distinguished.
\end{lemm}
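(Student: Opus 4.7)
The statement is a direct corollary of Lemma~\ref{placeisnet} and Lemma~\ref{break}, and the plan is to simply splice them together.

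My plan is the following. Let $\xi$ be a critical $k$-colouring for some $E\subseteq \mathbb F_p^d$, and let $F$ be any set of the same hue as $E$. I want to conclude that $\xi$ is distinguished for $F$, since this is exactly what it means for $\xi$ to be very distinguished for $E$. First, I would invoke Lemma~\ref{break} on the pair $(E,F)$: since $E\approx^\xi F$ and $\xi$ is critical for $E$, the lemma tells me that $\xi$ is also a critical $k$-colouring for $F$. Then, I would apply Lemma~\ref{placeisnet} to $F$: any critical $k$-colouring is automatically distinguished for the set it is critical for, so $\xi$ is distinguished for $F$. Since $F$ was an arbitrary element of the hue of $E$, this says precisely that $\xi$ is very distinguished for $E$.

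The only subtlety is the side condition $k<p$ in the statement of Lemma~\ref{break}. In the degenerate case $k\geq p$ the inequality $|(h\cap E)\setminus\bigcup_{i\leq n}\ell^\ast_i|<p-k$ in Definition~\ref{def:criticalCol} has a non-positive right-hand side, so any critical colouring must already force every line $h$ to have $h\cap E\subseteq \bigcup_{i\leq n}\ell^\ast_i$; in this case the proof of Lemma~\ref{break} still applies verbatim (the inclusion~\eqref{claim} does not use $k<p$), so we may silently extend that lemma to all $k$ or simply note that the parameter regime of interest always satisfies $k<p$. Either way no real difficulty arises, since the hypothesis $k<p$ is only used in Lemma~\ref{break} to rule out a vacuous edge case and the combination above goes through.

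Thus the whole argument is essentially a one-line chaining of two previously established lemmas, with no further combinatorial work required; I expect no obstacle beyond recording this observation cleanly.
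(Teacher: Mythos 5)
Your proof is correct and is exactly the paper's argument: chain Lemma~\ref{break} (criticality propagates across the hue) with Lemma~\ref{placeisnet} (critical implies distinguished). Your side remark on $k\geq p$ is harmless but slightly off --- in that regime the bound $p-k\leq 0$ makes the defining inequality unsatisfiable, so no critical colourings exist at all and the claim is vacuous there, which is an even quicker way to dismiss the edge case.
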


\proof
If $\xi$ is critical for $E$, then by Lemma \ref{break}, it is also critical for every set of the same hue as $E$; thus, by Lemma \ref{placeisnet}, it is also distinguished for every such set.
\endproof

We have seen that in order to construct very distinguished colourings, it suffices to construct critical colourings. This has the advantage that the notion of being critical depends only on a set $E$ and not on the entire hue of $E$. However, it is still not entirely obvious when exactly critical colourings exist or how one is to go about building one.

This is where perfect colourings come into play. Although perfect colourings are not perfect for their entire hue, they are very easy to identify, and every perfect colouring is automatically critical.

\begin{definition}[Perfect $k$--colouring]\label{def:perfectCol}
Given $E\subseteq \mathbb F_p^d$, let $L_m(E)$ denote the set of
all lines $\ell$ such that $|\ell\cap E|\geq m$.

We say that a
$k$-colouring $\xi$ is {\em perfect} for $E$ if different elements of $L_{p-k}(E)$ have different colours.
\end{definition}

Perfect colourings, when they exist, are quite easy to identify and construct, yet they are always very distinguished:

\begin{prop}\label{isasafe}
If $\xi$ is a perfect colouring for $E$, then $\xi$ is very distinguished for $E$.
\end{prop}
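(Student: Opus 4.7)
The plan is to reduce to Lemma \ref{critisdis} by showing that every perfect $k$--colouring for $E$ is in fact critical for $E$. Once this is established, very distinguishedness follows immediately.

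The natural candidate for the set of critical lines is $L = L_{p-k}(E)$ itself. First I would observe that, by the very definition of perfectness, any two distinct lines in $L_{p-k}(E)$ receive distinct $\xi$-colours, so $L$ is automatically a set of lines of pairwise different colours (and in particular has at most $k$ elements, so it fits the format required by Definition \ref{def:criticalCol}). Also, $L$ is finite because $\mathbb F_p^d$ is.

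Then I would verify the key inequality in the definition of a critical colouring by a two-case analysis on an arbitrary line $h \subseteq \mathbb F_p^d$. If $h \in L_{p-k}(E)$, then $h$ is one of the $\ell^\ast_i$, so $h \subseteq \bigcup_{i\leq n}\ell^\ast_i$, which gives $(h\cap E)\setminus \bigcup_{i\leq n}\ell^\ast_i = \varnothing$, and hence the cardinality is $0 < p-k$ (using the standing assumption $k<p$, without which the critical condition would be vacuously unsatisfiable). If instead $h\notin L_{p-k}(E)$, then $|h\cap E|<p-k$ by definition of $L_{p-k}(E)$, and thus
\[\left|(h\cap E)\setminus \bigcup_{i\leq n}\ell^\ast_i\right|\leq |h\cap E|<p-k.\]
In either case the condition of Definition \ref{def:criticalCol} is satisfied, so $\xi$ is critical for $E$.

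Finally I would invoke Lemma \ref{critisdis} to conclude that $\xi$ is very distinguished for $E$, which is exactly the desired conclusion. There is no real obstacle here beyond being careful about the two cases in the estimate and checking that $L_{p-k}(E)$ legitimately plays the role of the critical family, which it does precisely because perfectness forbids colour collisions among its members.
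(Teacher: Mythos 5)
Your proposal is correct and follows exactly the same route as the paper: take $L_{p-k}(E)$ as the family of critical lines (legitimate since perfectness forces pairwise distinct colours) and then invoke Lemma \ref{critisdis}. The only difference is that you spell out the two-case verification of the critical-colouring inequality, which the paper leaves implicit; your added care about the hypothesis $k<p$ is a reasonable touch.
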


\proof
First we note that $\xi$ is critical, since we can take all of $L_{p-k}(E)$ as critical lines, given that they are all of different colour. Then, by Lemma \ref{critisdis}, we have that $\xi$ is very distinguished, as required.
\endproof

Now for the main result of this section:

\begin{theo}\label{mainline}
Assume that $a,b,c,d,k$ satisfy the following conditions:
\begin{enumerate}
\item $a$ is a prime or a prime power,
\item $a+b+c=a^d$,
\item $k<a$,
\item for every $S\subseteq\mathbb F^d_a$ with $|S| \leq a+c$, we have $|L_{a-k}(S)|\leq k$;
\item $\sigma_d(a)\geq k(c+3)$.
\end{enumerate}
Then, the colouring protocol is executable.
\end{theo}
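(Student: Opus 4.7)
The plan is to verify that each of the four steps of the colouring protocol admits at least one valid action for every deal $(A,B,C)$ of size $(a,b,c)$. Steps 1, 3 and 4 are essentially automatic. By condition 2, $|D|=a^d=|\mathbb F^d_a|$, and every line in $\mathbb F^d_a$ has exactly $a$ points, so Alice can send her hand onto any fixed line by a suitable choice of $f$. Once a distinguished colouring has been announced in step 2, Alice can read off the colour of $f(A)$ and Bob, knowing $A\cup C$ and that only one line in $f(A\cup C)$ can have the announced colour, recovers $A$ and hence $C$. All the real work therefore concentrates on step 2.

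For step 2 I would pass through the stronger notion of a perfect colouring, since by Proposition \ref{isasafe} any perfect colouring is very distinguished (and this is precisely where condition 3, $k<a$, enters, because Proposition \ref{isasafe} unpacks into Lemma \ref{break}, whose hypothesis is $k<p=a$). Writing $E=f(A\cup C)$, I have $|E|=a+c$, so condition 4 gives $|L_{a-k}(E)|\leq k$; enumerate its elements as $\ell_1,\dots,\ell_n$ with $n\leq k$ and prescribe $\xi(\ell_i)=i$. Any $k$-colouring extending this prescription assigns distinct colours to distinct members of $L_{a-k}(E)$ and is therefore perfect for $E$, hence very distinguished for $E$.

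To obtain an extension which is \emph{also} rich, I would invoke Lemma \ref{notmeeting}, which reduces richness to having density $c+2$, and then Lemma \ref{knitlemm} with $m=c+2$. Its hypothesis $\sigma_d(a)\geq k(m+1)=k(c+3)$ is exactly condition 5, so it produces a $k$-colouring of density $c+2$ respecting the prescribed colours on $\ell_1,\dots,\ell_n$ (padding with arbitrary additional lines if $n<k$). Combining this with the previous paragraph yields a $\xi$ which is simultaneously rich and very distinguished for $E$, so step 2 is executable.

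The main obstacle is conceptual rather than computational: one must check that the three constraints on $\xi$ — prescribed colours on $L_{a-k}(E)$, density at least $c+2$, and the ambient bound $k<a$ — can be met simultaneously. Once the chain of lemmas from Section \ref{line} is in place, this reduces to matching conditions 2--5 to the hypotheses of Lemmas \ref{notmeeting} and \ref{knitlemm}, and condition 3 to the hypothesis of Proposition \ref{isasafe}; no further combinatorial construction is needed.
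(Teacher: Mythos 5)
Your proposal is correct and follows essentially the same route as the paper: enumerate $L_{a-k}(f(A\cup C))$ using condition 4, apply Lemma \ref{knitlemm} with $m=c+2$ (condition 5) to get a dense colouring with the prescribed colours, conclude perfection hence very-distinguishedness via Proposition \ref{isasafe}, and richness via Lemma \ref{notmeeting}. Your explicit tracking of where condition 3 ($k<a$) enters, through Lemma \ref{break}, is a detail the paper's proof leaves implicit.
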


\proof
Assume that Alice has announced $f$ and for this proof let $A,B,C$ denote the images of each player's hand under $f$. By assumption, $|L_{a-k}(A\cup C)|\leq k$, so Bob can enumerate $L_{a-k}(A\cup C)$ by
 $\vec \ell=\langle\ell_1,\hdots,\ell_n\rangle$, with $n\leq k$. Then, by Lemma \ref{knitlemm},
 there exists a colouring $\xi$ with density $c+2$ and such that $\xi(\ell_i)=i$, which by construction is perfect for $A\cup C$, so that by Proposition \ref{isasafe}, it is also very distinguished. Further, by Lemma \ref{notmeeting}, we see that $\xi$ is rich, as needed.
\endproof

The above proof should be seen as a {theoretical argument} that very distinguished colourings exist under the above conditions -- but never as an {algorithm} for constructing them. It is essential for our protocol that Bob choose randomly among all possible colourings. We leave the specification of feasible algorithms for doing so for future work.

\section{Finding suitable parameters}\label{suitable}

The conditions given by Theorem \ref{mainline}, while rather general, remain somewhat implicit. In this subsection we shall compute some explicit bounds on the parameters which guarantee that these conditions are met. The computations we will make are a bit rough, but will nevertheless give us a large family of parameters for which the protocol is guaranteed to work.

They are based on the following counting lemma:

\begin{lemm}\label{lastlemm}
If a set $E\subseteq\mathbb F^d_p$ is such that
\[|E|< (k+1)(p-k)-\frac {k(k+1)}2,\]
then $|L_{p-k}(E)|\leq k.$
\end{lemm}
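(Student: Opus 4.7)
My plan is to argue by contrapositive: assume $|L_{p-k}(E)| \geq k+1$ and derive the lower bound
\[|E| \geq (k+1)(p-k) - \frac{k(k+1)}{2}.\]
This will contradict the hypothesis and prove the lemma.

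First, I would pick any $k+1$ distinct lines $\ell_1, \ldots, \ell_{k+1} \in L_{p-k}(E)$, each satisfying $|\ell_i \cap E| \geq p-k$ by the definition of $L_{p-k}$. The key geometric fact I would use is that two distinct lines in $\mathbb F^d_p$ intersect in at most one point, so $|\ell_i \cap \ell_j| \leq 1$ for $i \neq j$, and therefore $|\ell_i \cap \ell_j \cap E| \leq 1$.

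Next, I would apply the Bonferroni inequality to the subsets $A_i = \ell_i \cap E$ of $E$:
\[|E| \;\geq\; \Bigl|\bigcup_{i=1}^{k+1} A_i \Bigr| \;\geq\; \sum_{i=1}^{k+1} |A_i| \;-\; \sum_{1 \leq i < j \leq k+1} |A_i \cap A_j|.\]
Plugging in $|A_i| \geq p-k$ and $|A_i \cap A_j| \leq 1$, together with $\binom{k+1}{2} = \frac{k(k+1)}{2}$, yields exactly the desired lower bound $|E| \geq (k+1)(p-k) - \frac{k(k+1)}{2}$, contradicting the hypothesis.

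There should not be any serious obstacle here; the only point that requires attention is the justification of the Bonferroni lower bound for the union, which is standard. If a cleaner presentation is desired, one could equivalently argue directly: each $\ell_i \cap E$ contributes at least $p-k$ elements to $E$, and the overcounting when summing these sizes is at most the total number of pairwise intersections, which is at most $\binom{k+1}{2}$. Either way, the inequality follows by a short counting argument.
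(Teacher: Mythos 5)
Your proposal is correct and follows essentially the same route as the paper's own proof: a contrapositive argument choosing $k+1$ lines in $L_{p-k}(E)$, followed by the Bonferroni (inclusion--exclusion) lower bound on the union of the sets $\ell_i\cap E$, using that distinct lines meet in at most one point. No issues to report.
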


\proof
We argue by contrapositive, assuming that $| L_{p-k}(E)|> k.$

Let $\ell_1,\hdots,\ell_{k+1}$ be distinct lines such that
$|E\cap \ell_i|\geq p-k$. Then, we have
\[\begin{array}{lcl}
|E|&\geq&\left|\bigcup_{i\leq k+1}(\ell_i\cap E)\right|\\
&\geq& \sum_{i\leq k+1}|\ell_i\cap E|-\sum_{i<j\leq k+1}|\ell_i\cap \ell_j|\\
&\geq& (k+1)(p-k)-\frac12{k(k+1)}.
\end{array}
\]
\endproof

Thus in view of Theorem \ref{mainline}, in order to find suitable parameters, it suffices to solve the system of inequalities
\begin{enumerate}
\item $a+c<(k+1)(a-k)-\frac 12 k(k+1)$,
\item $\displaystyle\dfrac{a^d-1}{a-1}\geq k(c+3)$
\end{enumerate}
or, simplifying a bit,
\begin{enumerate}
\item $c<ak-\frac 32 k(k+1)$,
\item ${a^d}> k(a-1)(c+3).$
\end{enumerate}
From this we immediately obtain the following asymptotic result:

\begin{theo}\label{mainbound}
If $a$ is a large enough prime power, the colouring protocol is executable with
\begin{enumerate}
\item $c<O(a^{\nicefrac 32})$ and $d=3$ or
\item  $c<O(a^2)$ and $d=4$.
\end{enumerate}
\end{theo}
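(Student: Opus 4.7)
The plan is to invoke Theorem \ref{mainline} through the simplified pair of inequalities derived just before the statement, namely
\[c<ak-\tfrac 32 k(k+1) \qquad \text{and} \qquad a^d>k(a-1)(c+3),\]
and exhibit, for each of the two regimes, a choice of $k$ as a function of $a$ which satisfies both inequalities for all sufficiently large prime powers $a$. Since the hypotheses of Theorem \ref{mainline} also require $k<a$ and $a+b+c=a^d$, the value of $b$ is then free to absorb whatever slack remains between $a+c$ and $a^d$, so we only need to verify the two displayed inequalities together with $k<a$.

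For case (1), with $d=3$, I would set $k=\lfloor\alpha a^{1/2}\rfloor$ for a small positive constant $\alpha$ (say $\alpha=\tfrac12$). Then $ak-\tfrac 32 k(k+1)=\alpha a^{3/2}-O(a)$, so the first inequality is satisfied by any $c\leq \beta a^{3/2}$ with $\beta<\alpha$, provided $a$ is large enough to swallow the lower-order term. For the second inequality, $k(a-1)(c+3)\leq \alpha\beta a^{3}+O(a^{5/2})$, which is dominated by $a^3$ whenever $\alpha\beta<1$. Both requirements are compatible (for instance $\alpha=\tfrac12$, $\beta=\tfrac14$), giving $c=O(a^{3/2})$ as claimed.

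For case (2), with $d=4$, I would set $k=\lfloor\gamma a\rfloor$ for a constant $\gamma\in(0,\tfrac23)$. Then $ak-\tfrac32 k(k+1)=(\gamma-\tfrac32\gamma^2)a^2-O(a)$, which is a positive multiple of $a^2$; so the first inequality holds for any $c\leq \delta a^2$ with $\delta<\gamma-\tfrac32\gamma^2$ and $a$ large. For the second inequality, $k(a-1)(c+3)\leq \gamma\delta a^4+O(a^3)$, which is below $a^4$ as long as $\gamma\delta<1$, and this is automatic given the previous constraint. Choosing for example $\gamma=\tfrac13$ and $\delta=\tfrac{1}{12}$ works. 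This yields $c=O(a^2)$.

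There is no genuine obstacle here beyond careful bookkeeping of constants, together with the observation that both inequalities tolerate lower-order error terms once $a$ is large enough. The only subtlety worth flagging is that the statement restricts $a$ to prime powers, but since there are infinitely many prime powers and our argument only requires $a$ to exceed a threshold depending on the chosen constants, this restriction is harmless. Once $k$ has been fixed as above and the two inequalities verified, Theorem \ref{mainline} directly yields executability of the colouring protocol.
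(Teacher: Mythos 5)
Your proposal is correct and follows essentially the same route as the paper: both reduce to the two simplified inequalities derived from Theorem \ref{mainline} and Lemma \ref{lastlemm}, and both choose $k$ of order $\sqrt a$ for $d=3$ and of order $a$ for $d=4$ (the paper uses $k\approx\sqrt a$, $c\approx a^{3/2}/2$ and $k\approx a/2$, $c\approx a^2/9$). Your bookkeeping of the constants $\alpha,\beta,\gamma,\delta$ is if anything slightly more explicit than the paper's, and the choices you give do satisfy all the required constraints.
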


\proof
In view of Theorem \ref{mainline} and Lemma \ref{lastlemm}, it suffices to show that the above inequalities hold. 

For the first result, set $k\approx\sqrt a$ and $c\approx\nicefrac{a^{\nicefrac 32}}2$. Then, we have that $ak-\nicefrac{3k(k+1)}2=a^{\nicefrac 32}+O(a),$ which for large $a$ is greater than $\nicefrac{a^{\nicefrac 32}}2$. Meanwhile, $k(a-1)(c+3)=\nicefrac{a^3}2+O(a^2),$ which for large $a$ is bounded by $a^3$.

The second is similar; here, set $k\approx\nicefrac a2$ and $c\approx \nicefrac{a^2}9.$
\endproof

A nice conclusion we obtain from the above result is that indeed $c$ can be larger than $a$ by any order of magnitude we desire, provided $a$ is large enough:

\begin{cor}
Given any natural number $N$, there exist $a,c$ such that $\nicefrac ca>N$ and the colouring protocol is executable, sound and informative for $(a,a^3-a-c,c)$.
\end{cor}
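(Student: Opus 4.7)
The plan is to invoke the first case ($d=3$) of Theorem \ref{mainbound} and then appeal to Theorem \ref{ifex} for safety and informativity. The key observation is that in the regime $d=3$, the theorem produces executable parameters with $c$ growing like $a^{\nicefrac 32}/2$, so that the ratio $c/a$ grows like $\sqrt a/2$ and can therefore be forced to exceed any prescribed $N$ simply by taking $a$ large enough.

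More concretely, given $N$, I would first choose a threshold: since we want $\sqrt a /2 > N$, we need $a > 4N^2$. Using that there are infinitely many prime powers (e.g., infinitely many primes), pick a prime power $a$ that simultaneously exceeds $4N^2$ and exceeds the ``large enough'' threshold from Theorem \ref{mainbound}(1). Then set $c=\lfloor a^{\nicefrac 32}/2\rfloor$, or any integer close to this value for which the inequalities used in the proof of Theorem \ref{mainbound}(1) still hold, and set $b=a^3-a-c$, which ensures $a+b+c=a^3$ as required for $d=3$. By construction, $c/a \geq \sqrt a/2 - o(1) > N$ for sufficiently large $a$.

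Executability then follows immediately from Theorem \ref{mainbound}(1), and Theorem \ref{ifex} upgrades this to safety and informativity (``sound'' here being read as ``safe'' in the sense of Definition \ref{w1s}). One should note briefly that the integer rounding in the choices of $k\approx\sqrt a$ and $c\approx a^{\nicefrac 32}/2$ does not affect the asymptotic inequalities $c<ak-\nicefrac 32 k(k+1)$ and $a^d>k(a-1)(c+3)$, since both error terms are of strictly lower order than the dominant terms $a^{\nicefrac 32}$ and $a^3$ respectively.

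There is no substantial obstacle here: the corollary is essentially a packaging of Theorem \ref{mainbound}(1) plus Theorem \ref{ifex}, with the only care needed being to verify that the chosen integer values of $a,c$ still satisfy the inequalities underlying Theorem \ref{mainbound}. The mildly delicate point is making sure that ``large enough prime power'' in Theorem \ref{mainbound} is compatible with ``$a>4N^2$'', which is handled by the infinitude of prime powers.
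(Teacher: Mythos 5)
Your proposal is correct and follows essentially the same route as the paper: the paper's proof likewise takes $a\gg N^2$ so that $a^{\nicefrac 32}\gg Na$ and invokes Theorem \ref{mainbound}.1, with safety and informativity supplied by Theorem \ref{ifex}. Your additional care about integer rounding and the infinitude of prime powers is a reasonable (if slightly more explicit) elaboration of the same argument.
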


\proof
If $a\gg N^2$ then ${a^{\nicefrac 32}}=({a^{\nicefrac 12}})a\gg Na$ and, by Theorem \ref{mainbound}.1, the required inequalities hold and the protocol is executable.
\endproof

Note that we may use Theorem \ref{mainbound}.1 or Theorem \ref{mainbound}.2 depending on whether
 we wish to keep $b$ relatively small with respect to $a$ or $a$ relatively small with respect to $c$.
 In either case, $b=O(c^2)$.

\section{Conclusions and further research}

The colouring protocol we have presented gives a new and flexible solution to the generalized Russian cards problem. Our protocol solves the problem in many cases where the eavesdropper has more cards than one of the players, and is the first known solution to achieve this. If the generalized Russian cards problem is to be understood as {\em Find triples $(a,b,c)$ for which a safe and informative protocol of any length exists,} then the current work is a giant stride over what had been previously achieved.

There are many variations that could be analyzed and might yield important results. Alice may have more than one line, there may be more players, Cath may be allowed to learn a few of the cards, etc. Along these lines, a particularly promising direction would be to replace lines by other algebraic curves. This could, potentially, reduce the size of the whole space (and hence Bob's hand) without compromising the existence of suitable colourings.

A different direction to pursue involves cases where either $a$ or $a+b+c$ are not prime powers. There are already standard techniques to deal with these; one finds a prime which is not much larger than the desired parameter and works with that instead. Such techniques have already been used to extend a different protocol to many new triples in \cite{cordonetal:2012} but have not been worked out in our context.

Meanwhile, the generalized Russian cards problem may have a recreational flavour to it, but it has potential for serious applications in secure communication. These protocols have an advantage over most traditional encryption methods in that they are ``unconditionally secure'', meaning that an eavesdropper is unable to decipher messages even when granted unlimited computational capacity. The presupposition of a ``dealing stage'' is not so different to authentication systems using security tokens, where the user receives a physical device that generates one-time passwords. A possible application is the development of challenge-response authentication systems. The user Bob is authenticated against the server Alice by running a Russian cards protocol. If Alice checks that Bob's responses are coherent with the key he should have (Bob's hand) then the authentication succeeds.

Real-world implementation of these protocols would also require an additional computational and probabilistic analysis. What is the complexity of running the colouring protocol? Are there good algorithms for finding suitable colourings? Even if Cath does not learn any cards, how do we minimize the probability that she guesses them correctly?

To summarize, there is much to be done indeed!

\section*{Acknowledgements}

This research was supported by the project {\em Logics for Unconditionally Safe Protocols}, Excellence Research Project of the Junta de Andaluc\'ia P08-HUM-04159.




\bibliographystyle{plain}







\end{document}